\newlength{\defbaselineskip}
\newcommand{\setlinespacing}[1]%
           {\setlength{\baselineskip}{#1 \defbaselineskip}}
\theoremstyle{plain}
\newtheorem{theorem}{Theorem}[section]
\newtheorem{lemma}[theorem]{Lemma}
\newtheorem{proposition}[theorem]{Proposition}
\newtheorem{corollary}[theorem]{Corollary}
\newtheorem{definition}[theorem]{Definition}
\theoremstyle{definition}
\newtheorem{assumption}{Standing Assumption}
\newtheorem{ass}[theorem]{Assumption}
\theoremstyle{remark}
\newtheorem{remark}[theorem]{Remark}
\numberwithin{equation}{section}
\DeclareMathOperator*{\esssup}{ess\,sup}
\newcommand{\cF}{\mathcal{F}}
\newcommand{\bE}{\mathbb{E}}
\newcommand{\bP}{\mathbb{P}}
\newcommand{\bR}{\mathbb{R}}
\newcommand{\bN}{\mathbb{N}}
\begin{document}
\title{Mean Field Portfolio Games with Epstein-Zin Preferences
}

\author{Guanxing Fu\footnote{Department of Applied Mathematics, and Research Centre for Quantitative Finance, The Hong Kong Polytechnic University,
         Hung Hom, Kowloon, Hong Kong; email: guanxing.fu@polyu.edu.hk. Fu gratefully acknowledges financial support through NSFC Grant No. 12471453, Hong Kong RGC (GRF) Grant No. 15218825, and internal grants from The Hong Kong Polytechnic University and Shenzhen Research Institute.}  ~ and ~ Ulrich Horst\footnote{Department of Mathematics, and School of Business and Economics, Humboldt-Universit\"at zu Berlin,
         Unter den Linden 6, 10099 Berlin, Germany; email: horst@math.hu-berlin.de. Horst gratefully acknowledges financial support by the Deutsche Forschungsgemeinschaft through CRC TRR 190.}
}

\maketitle

\begin{abstract}
	
	We study mean field portfolio games under Epstein-Zin preferences, which naturally encompass the classical time-additive power utility as a special case. 
 In a general non-Markovian framework, we establish a uniqueness result by proving a one-to-one correspondence between Nash equilibria and the solutions to a class of BSDEs.
	A key ingredient in our approach is a necessary local stochastic maximum principle, applied to log-wealth, tailored to Epstein-Zin utility, and a nonlinear transformation. In the deterministic setting, we further derive an explicit closed-form solution for equilibrium investment and consumption policies. The strength of our approach is further illustrated by two special cases: (i) in the power utility setting without consumption, we obtain the same one-to-one correspondence as in Fu and Zhou \cite{Fu-2021} under exactly the same assumption, but without invoking the dynamic programming principle in Espinosa and Touzi \cite{ET-2015}; and (ii) in the power utility setting with both investment and consumption, we strengthen the correspondence result of Fu \cite{Fu2023}, by proving a genuine one-to-one relation in the BMO space, where both the equilibrium strategy and the associated BSDE components belong to BMO.

\end{abstract}

{\bf AMS Subject Classification:} 93E20, 91B70, 60H30

{\bf Keywords:}{ Epstein-Zin utility, mean field game, stochastic maximum principle  }

\section{Introduction}

In their seminal paper  \cite{EZ-1989}, Epstein and Zin developed a class of recursive preferences over intertemporal consumption lotteries that permit risk attitudes to be disentangled from the degree of intertemporal substitutability. Utility optimization problems with Epstein-Zin preferences have since been analyzed in various settings by many authors, including \cite{Aurand-Huang-2023,DE-1992,HHJ-2023,HHJ-2023a,HHJ-2025,HLT-2024,Kraft-2017,Kraft-2010,Kraft-2013,MX2018,Melnyk2020,SS98,Seifried-2015,Xing2017}. 

In this paper, we consider a class of mean-field portfolio games under Epstein-Zin utility with relative performance concerns in a general stochastic framework. Mean field games (MFGs) are a powerful tool to analyze strategic interactions in large populations when each individual player has only a small impact on the behavior of other players. Introduced independently by Huang, Malham\'e and Caines \cite{HMC-2006} and Lasry and Lions \cite{LL-2007}, MFGs have been successfully applied to many economic and engineering problems. To name only a few of them among many others, applications include optimal trading under market impact \cite{FGHP-2018, FHX-2022,FHH-2023} to risk management \cite{CFMS-2016}, and from principal agent problems \cite{EMP-2016}, to optimal exploitation of exhaustible resources \cite{CS-2015, PUR}. These examples are illustrative rather than exhaustive, as a comprehensive survey lies beyond the scope of this paper.

Mean-field portfolio games, including finite-player games, with relative performance criteria where all players trade a common stock were first analyzed by Espinosa and Touzi \cite{ET-2015}. In a complete market setting, they established the existence of a unique Nash equilibrium for general time additive utility functions; in an incomplete market settings with player-specific portfolio constraints, they proved the uniqueness under exponential utility. Frei and dos Reis \cite{FR-2011} addressed the existence of equilibria in similar games. By leveraging the dynamic programming principle (DPP) from \cite{ET-2015}, they established a one-to-one correspondence between Nash equilibria and solutions to a multidimensional backward stochastic differential equation (BSDE) of quadratic growth and constructed a counterexample to show that equilibria in games with performance concerns may not exist. Lacker and Zariphopoulou \cite{LZ-2019} later considered a model where different players trade different stocks whose price dynamics is correlated through a common noise process. They identified a unique constant equilibrium in a time-homogeneous setting. Following up on these works, portfolio games with {\sl time-additive utilities} have been extended by many authors to an array of different settings and features, including jump signal, price impact, external habit formation, consumption, non-Markovian framework, Itô-diffusion environment, partial information, non-exponential discounting, interaction through a graphon; see \cite{Bank2026,Baeuerle2024,Bo2024,RP-2021b,Fu2023,Fu-2021,HZ-2021,Huang2026,LS-2020,Liang2024,Tangpi2024}. An exception is dos Reis and Platonov \cite{RP-2021}, who employ forward utility. All these works consider specific utility and linear coupling (multiplicative coupling should also be considered as linear). For results with general utility and/or nonlinear coupling, one may refer to \cite{Zaripopoulou2026,Souganidis2024}, where the game with general utility and coupling is introduced directly through master equations, and \cite{Zariphopoulou2024}, where a notion of forward MFG is proposed. 


To the best of our knowledge, portfolio games with Epstein-Zin preferences and performance concerns have only been studied in \cite{Riedel-2024, Wang2023}. The work \cite{RP-2021b} shows that the forward performance framework, both with and without competition, allows agents to exhibit behavior similar to that under Epstein-Zin utility by disentangling risk aversion from the elasticity of intertemporal substitution. However, it does not provide a systematic study of portfolio games under Epstein-Zin preferences.

We establish an existence of equilibrium result for such games with deterministic, though possibly time-varying model parameters, and a uniqueness of equilibrium result for general stochastic settings. Specifically, we prove that Nash equilibria in mean-field portfolio games with Epstein-Zin preferences are uniquely characterized in terms of a solution to a certain quadratic BSDE. Similar characterization results are often implicitly assumed - though rarely proved - in the literature.  For instance, Dianetti, Riedel and Stanca \cite{Riedel-2024} prove the existence of a unique simple equilibrium assuming that optimizing the Epstein-Zin utility index is equivalent to optimizing the driver of some BSDE. This is not always the case, though. Our uniqueness result shows that the simple equilibrium obtained in \cite{Riedel-2024} is indeed the unique bounded equilibrium.  


 
The fact that any solution to a certain BSDE yields a Nash equilibrium is established using the martingale optimality principle (MOP). This approach is consistent with the methodology adopted in prior works such as \cite{Fu2023,Fu-2021}, and extends the results in \cite{HIM-2005} beyond the benchmark case of time-additive utilities and the results in  \cite{Xing2017} to models with mean-field interaction. 
The sufficient condition yields an existence result, as most results in literature do. To establish the uniqueness of equilibria, the crucial step is to complement the sufficient condition by a characterization of any best response or equilibrium strategy in terms of the same BSDE leading to the sufficient condition. This one-to-one correspondence translate the analysis for unique equilibrium to that for a unique solution to a certain BSDE.

To prove the necessary characterization, we apply a {\it local} stochastic maximum principle to the log-wealth process. 
The key observation is that stochastic control problems involving Epstein-Zin preferences naturally lead to an FBSDE system as their state dynamics. Crucially, the driver of the BSDE component characterizing Epstein-Zin utility is not Lipschitz (see e.g. \cite{ElKaroui-2001} for the maximum principle for recursive utilities with Lipschitz drivers) which prevents us from using established  stochastic maximum principles for FBSDEs. 



To connect the adjoint processes arising from the stochastic maximum principle with the desired BSDE characterization, we introduce a nonlinear transformation that maps the adjoint processes to a candidate process $Y$, which we expect to solve the targeted BSDE. Subsequently we demonstrate that the difference between the Epstein-Zin utility associated with the optimal investment–consumption strategy and a certain process involving the optimal wealth and the transformed process $Y$ satisfies a linear BSDE with zero terminal condition. This crucial step confirms that the candidate process $Y$ coincides with the solution to the BSDE arising from the MOP, thereby establishing a one-to-one relation between the Nash equilibrium and the solution to certain BSDE. 


Our paper relates to two strands of the literature, and the comparison highlights our twofold contribution.
First, our method for establishing the one-to-one correspondence is novel. In the earlier works of the first author \cite{Fu2023, Fu-2021}, the necessary characterization relies on the DPP developed in \cite{ET-2015}. For Epstein–Zin utility, however, such a DPP is not available. Instead, we adopt a maximum-principle-based approach, which is significantly easier to verify than the DPP. Moreover, our method is not only more tractable and technically simpler, but also yields stronger results. In particular, we recover the one-to-one correspondence result in \cite{Fu-2021} for the power utility case with investment only, under exactly the same assumptions. Furthermore, we improve the correspondence result in \cite{Fu2023} by establishing a genuine one-to-one correspondence in the random setting.
Second, although uniqueness, especially the one-to-one correspondence, is often implicitly assumed in the literature, it is in fact highly nontrivial. This is particularly true in our setting, since even when it reduces to the time-additive case our model does not satisfy the classical Lasry–Lions monotonicity condition or the displacement monotonicity condition, which are commonly used to guarantee uniqueness in MFGs, nor does it satisfy the supermodularity condition generally, which typically yields the existence of minimal and maximal equilibria as an alternative to uniqueness; see e.g. \cite{Cardaliaguet2019,Dianetti2023,Gangbo2022}. As a corollary, our result confirms the uniqueness claim for the simple equilibrium considered in \cite{Riedel-2024}.

The remainder of the paper is organized as follows. Section 2 recalls the single player benchmark model with Epstein-Zin utility along with its game-theoretic extensions. Section \ref{sec:one-to-one} establishes a one-to-one correspondence between the Nash equilibrium and the solution to a certain BSDE. The sufficient condition—showing that a solution to the BSDE yields a Nash equilibrium—is presented in Section~\ref{sec:sufficient}. The necessary condition—showing that any Nash equilibrium must correspond to a BSDE solution—is developed in Section~\ref{sec:necessary}. 
The uniqueness of the Nash equilibrium result is established in Section~\ref{sec:uniqueness}. In Section~\ref{sec:existence}, we provide a closed-form expression for the equilibrium investment and consumption strategy in a  deterministic setting. Last but not least, in Section \ref{sec:power} we recover the necessary characterization of the best response in the power utility setting and further demonstrate the power of our approach.



 \section{The mean field game with Epstein-Zin utility} \label{sec:EZ-utility}

In this section we recall the benchmark model of a single player optimization problem with Epstein-Zin preferences along with a game-theoretic extension with performance concerns. 

We fix a common time horizon $[0,T]$ for all investors and assume that all random variables are defined on a filtered probability space $(\Omega, \cF, (\cF_t)_{t \in [0,T]}, \bP)$ that carries independent Brownian motions $W^0, W^1,\cdots, W^N$ for some $N \in \bN$. The Brownian motions $W^1,\cdots, W^N$ capture {\sl idiosyncratic noises} in an $N$-player model while the Brownian motion $W^0$ captures a {\sl common noise} shared by all agents.  

 \subsection{Single agent benchmark model}

In a single agent benchmark model the agent can invest in a risky asset whose price process $(S_t)$ follows the SDE
 \[
 	\frac{dS_t}{S_t} = h_t dt + \sigma_t dW_t + \sigma^0_t dW^0_t, \qquad t \in [0,T].
 \] 
 
 \begin{assumption}
	The processes $h$, $\sigma$ and $\sigma^0$ are progressively measurable and bounded, and $\Sigma^2:=(\sigma)^2+(\sigma^0)^2$ is strictly positive.
 \end{assumption}
We denote by $\pi$ and $c$ a pair of progressively measurable processes that represent the investor’s investment strategy and consumption strategy, respectively. Specifically, $\pi_t$ denotes the proportion of wealth invested in the risky stock and $c_t$ denotes the proportion of wealth consumed at time $t \in [0,T]$. The corresponding wealth process $X^{\pi, c}$ then evolves according to the SDE
\begin{equation*}
	\left\{\begin{split}
	dX^{\pi, c}_t =&~   \pi_t X_t^{\pi,c} h_t\, dt + \pi_t X_t^{\pi,c} \sigma_t\, dW_t + \pi_t X_t^{\pi,c} \sigma^0_t\, dW^0_t - c_t X_t^{\pi,c}\, dt, \qquad t \in (0,T], \\
	X^{\pi,c}_0= &~x>0.
	\end{split}\right.
\end{equation*}
In a model without mean-field interaction, the dynamics of the investor's Epstein-Zin utility from dollar consumption $cX^{\pi,c}$ is given by  
 \begin{equation}\label{EZ-0}
 	\begin{split}
 		\tilde V^{\pi,c}_t=&~\mathbb E\left[ \left.  \int_t^T   f(c_sX^{\pi,c},\tilde V^{\pi,c}_s)\,ds + \alpha U(c_TX_T^{\pi,c}) \right|\mathcal F_t     \right], \qquad t \in [0,T],
 	\end{split}
 \end{equation}
where the {\sl aggregator function} $f:[0,\infty) \times (-\infty,0] \to \bR$ is given by
\begin{equation}\label{aggregator}
	f(C,V) = \frac{	 \delta C^{1-\frac{1}{\psi}}		}{1-\frac{1}{\psi}} \{  (1-\gamma)V     \}^{1-\frac{1}{\vartheta}}  - \delta\vartheta V,  \quad \vartheta := \frac{1 - \gamma}{1 - \frac{1}{\psi}}.
\end{equation}
Here $\delta>0$ represents the {discount rate}, $\gamma >0$ specifies the {relative risk aversion}, $\psi >0$ represents the {elasticity of intertemporal substitution} (EIS), $\alpha > 0$ is a {rate of bequest}, and  the {\sl bequest function} $U$ is of power type, i.e.
\begin{equation}\label{power}
	U(x)=\frac{x^{1-\gamma}}{1-\gamma}. 
\end{equation}
\begin{assumption}
We assume throughout this paper that\footnote{The choice in \eqref{parameters} allows for a unified treatment of Epstein-Zin and power utility. Technically, our results also hold under the conditions $\psi \gamma \leq 1$ and $\psi \leq 1$. 
}
\begin{equation}\label{parameters}
	{\psi \gamma \geq 1, \quad \psi \geq 1. }
\end{equation}
\end{assumption}
It is typically assumed that all wealth is consumed at the terminal time, i.e. the terminal consumption rate $c_T=1$. 



 \subsection{Game theoretic extensions}

The single agent model with Epstein-Zin preferences has recently been extended to multi-player models and MFGs with relative performance concerns by \cite{Riedel-2024}. Following their approach
, we consider an $N$-player model where all players share identical Epstein-Zin preferences, and where each player $i=1,\cdots, N$ can trade a risky stock whose dynamics is given by 
 \[
 	\frac{dS^i_t}{S^i_t} = h^i_t dt + \sigma^i_t dW^i_t + \sigma^{i0}_t dW^0_t, \qquad t \in [0,T].
 \] 
The wealth dynamics is defined as in the single player case. Given the profile $c=(c^1,\cdots,c^N)$ and $\pi=(\pi^1,\cdots,\pi^N)$, and the associated wealth profile $(X^{1},\cdots,X^{N})$, we denote by 
\[
	\bar C^i_t :=  \left( \prod_{j \neq i} C^j_t \right)^{\frac{1}{N-1}}, \quad \bar X^i_t :=  \left( \prod_{j \neq i} X^j_t \right)^{\frac{1}{N-1}}, \qquad t \in [0,T]
\]
the ergodic average consumption and wealth, respectively, of the players'$i$'s competitors, where $C^i=c^i X^i$. Consider utility functionals of the form
 \begin{equation}\label{EZ-2}
 	\begin{split}
 		V^{i,\pi,c}_t=&~\mathbb E\left[ \left.  \int_t^T  f_i(C^i_s (\bar C^i_s)^{-\theta_i},V^{i,\pi,c}_s)\,ds + \alpha_i U_i(X^{i}_T \bar X^i_T ) \right|\mathcal F_t     \right], \qquad t \in [0,T],
 	\end{split}
 \end{equation}
 where $f_i$ and $U_i$ are defined as in \eqref{aggregator} and \eqref{power}, with $\delta$, $\psi$ and $\gamma$ replaced by $\delta_i$, $\psi_i$ and $\gamma_i$, respectively.
 \begin{remark}
We refer to \cite{Riedel-2024} and references therein for a detailed economic motivation of the above preference functional. We emphasize, though, that our approach differs from \cite{Riedel-2024} in one important aspect. In \cite{Riedel-2024} the authors benchmark own consumption and wealth of each player against the ergodic averages $\left( \prod_{j=1}^N C^j \right)^{\frac{1}{N}}$ and $\left( \prod_{j=1}^N X^j \right)^{\frac{1}{N}}$. In particular, own consumption and wealth is included in the benchmark.  We believe that our approach of only considering average consumption and wealth of competitors captures better the idea of relative performance concerns. For \emph{time-additive utilities}, the two approaches result in equivalent optimization problems, as the problems can be transformed into each other  by adjusting model parameters such as the risk aversion coefficient and the competition parameter; see \cite[Remark 2.5]{LZ-2019} for details. For Epstein-Zin utilities, such parameter distortions lack economic intuition/justification and the two approaches may be no longer equivalent (except in the limit when the number of players tends to infinity).
 \end{remark}	

To study the MFG version of the $N$-player model, we fix an $\cF^0$-progressively measurable stochastic process  $\nu=(\nu_t)$ (``mean field externality'')  that captures the impact of aggregate intertemporal and terminal consumption on a representative player's utility, where $\mathcal F_0$ is the filtration generated by the common noise $W^0$. Specifically, in the MFG the representative investor's utility from consumption is then given by 
 \begin{equation}\label{EZ-I}
 	\begin{split}
 		V^{\pi,c}_t=&~\mathbb E\left[ \left.  \int_t^T  f(c_sX_s^{\pi,c}\nu^{-\theta}_s,V^{\pi,c}_s)\,ds + \alpha U(X^{\pi,c}_T \nu^{-\theta}_T ) \right|\mathcal F_t     \right], \qquad t \in [0,T].
 	\end{split}
 \end{equation}
 where $X^{\pi,c}$ is the wealth process associated with $(\pi,c)$,
 and with a slight abuse of notation, we continue to denote by $\mathcal F$ the filtration generated by the idiosyncratic noise $W$ and the common noise $W^0$ in the MFG setting, as no confusion will arise from the context.

Before introducing the space of admissible strategies, we first introduce several spaces of progressively measurable processes:
\begin{align*}
        &~L^\infty=\left\{ y:      \esssup_{\omega,t}|y_t|<\infty            \right\},\quad L^\infty_{\text{log},+}=\left\{ y: y\text{ is positive-valued, and }\log y\in L^\infty \right\}, \\
        &~H^2_{BMO}=\left\{ y: \esssup_{\mathrm{ stopping ~time~} \tau}\mathbb E\left[\left. \int_\tau^T y^2_s\,ds\right|\mathcal F_\tau   \right]<\infty \right\},\\
        &~ L^{\beta,T}_+=\left\{   y: y\text{ is positive-valued, and } \mathbb E\left[ \int_0^T y^\beta_t\,dt + y_T^\beta   \right]<\infty      \right\},\\
        &~S^\beta=\left\{ y: \mathbb E\left[ \sup_{0\leq t\leq T}|y_t|^\beta \right]<\infty \right\},\quad M^\beta=\left\{    y: \mathbb E\left[ \left(\int_0^T |y_t|^2\,dt\right)^\beta   \right]<\infty \right\}.
\end{align*} 
To simplify notation, we do not explicitly indicate the underlying filtration in the space notation, since it is clear that all processes in the MFG setting are $\mathcal F$-adapted, with the exception of $\nu$, which is $\mathcal F^0$-adapted.

\begin{definition}\label{def:admissibility}
An investment-consumption rate pair $(\pi,c)$, is called {\sl admissible} if it satisfies the following properties: 
 \begin{enumerate}
	\item[(i)] 
    All terminal wealth is consumed, i.e. $c_T=1$, equivalently, $X^{\pi,c}_T=C_T$;
 	
    \item[(ii)] The consumption rate process $c$ belongs to $L^\infty_{\mathrm{log},+}$;
 	\item[(iii)] The investment rate process $\pi$ belongs to $H^2_{BMO}$; 
    \item[(iv)] The dollar consumption process $C\in \bigcap_{\beta\in\mathbb R}L^{\beta,T}_+$. 
\end{enumerate} 
\end{definition}
\begin{remark}
The conditions $\pi \in H^2_{BMO}$ and $c^* \in L^\infty_{\text{log},+}$ are standard in the literature on additive utility; see, for instance, \cite{Fu-2021, Fu2023}. The integrability condition $C \in \bigcap_{\beta\in\mathbb R}L^{\beta,T}_+$ arises from the literature on Epstein-Zin utility (e.g., \cite{Seifried-2015}) and is specifically tailored to that framework. When Epstein-Zin utility reduces to power utility, however, this condition on $\bigcap_{\beta>0}L^{\beta,T}_+$ is no longer required. We refer to Section \ref{sec:power} for further discussion.\end{remark}
The set of admissible investment-consumption pairs is denoted ${\cal A}$. Fixing $\nu$, the representative investor's optimization problem is then given by
\begin{equation}\label{utility-max}
		\sup_{(\pi,c) \in {\cal A}}V^{\pi,c}_0 =\sup_{(\pi,c) \in {\cal A}}\mathbb E\left[  \int_0^T f(c_sX_s^{\pi,c}\nu^{-\theta}_s,V^{\pi,c}_s)  \,ds + \alpha U(X^{\pi,c}_T\nu^{-\theta}_T)  \right]. 
\end{equation}
Let $\big( c^*(\nu), \pi^*(\nu) \big)$ be an optimal consumption-investment pair, given $\nu$, with corresponding wealth process $X^*(\nu)$. In a mean-field equilibrium the expected optimal consumption and wealth processes coincides the ``anticipated'' externalities. 

\begin{definition}\label{def:equilibrium}
An $\cF^0$-progressively measurable stochastic process $\nu^*=(\nu^*_t)$ forms a mean-field equilibrium if a.s.
\[
	\widehat \nu^*_t = \bE[\widehat {c^*_tX^*_t}(\nu^*)| \cF^0_t],~0\leq t<T, \quad \mbox{and} \quad \widehat\nu^*_T = \bE[\widehat X^*_T(\nu^*)| \cF^0_T]
\]  
where $\widehat \xi := \log \xi$ denotes the logarithm of a strictly positive random variable $\xi$.
\end{definition}
In what follows we prove that at most one (in a certain class) mean-field equilibrium can exist under our standing assumptions. For the special case of deterministic model parameters we subsequently prove the existence (and hence uniqueness) of an equilibrium. Having solved the MFG, solving finite player game requires only minor adjustment of previously given arguments.

 
 \section{Characterization of mean-field equilibria}\label{sec:one-to-one}
 In this section we establish our characterization of equilibrium result. We prove that any mean-field equilibrium satisfies a certain BSDE and that, conversely, any solution to the said BSDE yields a mean-field equilibrium. 
 \subsection{From BSDE to Nash equilibrium}\label{sec:sufficient}
 
To identify the FBSDE system and subsequently the BSDE that any mean-field equilibrium must satisfy we first solve the representative investor's optimization problem for any fixed externality using the MOP. Subsequently, we solve the fixed-point problem to characterize MFG equilibria. 
 
 \subsubsection{Optimization}\label{sec:MOP}
 
 To solve the representative investor's utility optimization problem for a given externality $\nu\in \bigcap_{\beta\in\mathbb R}L^{\beta,T}_+$, we introduce the auxiliary process
 \begin{equation}\label{eqn:R}
 		R^{\pi,c}_t=\alpha \frac{ (X^{\pi,c}_t)^{1-\gamma}  }{1-\gamma} e^{Y_t} + \int_0^t f\left(c_sX_s\nu_s^{-\theta},  \alpha \frac{(X^{\pi,c}_s)^{1-\gamma}}{1-\gamma}e^{Y_s}\right)\,ds, \qquad t \in [0,T],
 \end{equation}
 where $Y$ denotes the first component of the solution to a BSDE of the form 
 \begin{equation} \label{BSDE1}
 		-dY_t=g_t\,dt-Z_t\,dW_t-Z^0_t\,dW^0_t,\qquad Y_T=-\theta(1-\gamma)\log\nu_T,
 \end{equation}
where the driver $g$ is to be determined such that the following conditions are satisfied:  
\begin{itemize}
 	\item for each admissible investmet-consumption pair $(\pi,c)$ the process $R^{\pi,c}$ is a supermartingale,
 	\item for some admissible investmet-consumption pair $(\pi^*,c^*)$ the process $R^{\pi^*,c^*}$ is a martingale,
 	\item the initial values $R_0^{\pi,c}$ are independent of the choice of $(\pi,c)$. 
 \end{itemize}
 The following remark justifies our approach. It shows that under the above assumptions, the pair $(\pi^*,c^*)$ is an optimal strategy.
 \begin{remark}
 	From the definition of the utility processes $V^{\pi,c}$ and $R^{\pi,c}$, we know that
 	\begin{itemize}
 		\item the process $V^{\pi,c}_t+\int_0^t f(c_sX_s^{\pi,c}\nu_s^{-\theta},V_s^{\pi,c})\,ds$ is a local martingale
 		\item the process $\alpha\frac{(X^{\pi,c}_t)^{1-\gamma}}{1-\gamma} e^{Y_t} + \int_0^t f(c_sX_s^{\pi,c}\nu_s^{-\theta}, \alpha \frac{(X^{\pi,c}_s)^{1-\gamma}}{1-\gamma}e^{Y_s}  )\,ds$ is 
		a local supermartingale,
	\end{itemize} 
and that
\[
	V^{\pi,c}_T = \alpha\frac{(X^{\pi,c}_T)^{1-\gamma}}{1-\gamma} e^{Y_T}. 
\]	

Thus, by the comparison principle in \cite[proof of Proposition 2.2, Step 3]{Xing2017}, it holds that
\[
		\alpha\frac{(X^{\pi,c}_t)^{1-\gamma}}{1-\gamma} e^{Y_t} \geq V^{\pi,c}_t, \qquad t \in [0,T].
\]
Since the left-hand side of the above inequality is independent of $(\pi,c)$ at time $t=0$ we see that
\[		
	\alpha	\frac{x^{1-\gamma}}{1-\gamma}e^{Y_0} \geq \sup_{\pi,c} V_0^{\pi,c}.
\]
Since $R^{\pi^*,c^*}$ is assumed to be a martingale, the following equality holds for some process $(\zeta,\zeta^0)$:
\[
	\alpha \frac{ (X_t^{\pi^*,c^*})^{1-\gamma}  }{1-\gamma} e^{Y_t} + \int_0^t f\left(c^*_sX^*_s \nu_s^{-\theta},  \alpha \frac{(X_s^{\pi^*,c^*})^{1-\gamma}}{1-\gamma}e^{Y_s}\right)\,ds=\int_0^t\zeta_s\,dW_s+\int_0^t \zeta^0_s\,dW^0_s, \quad t \in [0,T].
\]

It follows that the process 
\[
	\alpha \frac{ (X^{\pi^*,c^*})^{1-\gamma}  }{1-\gamma} e^{Y} 
\]	
satisfies the recursive utility equation. As a result, by \cite[Theorem 3.1]{Seifried-2015} it holds that
\[
	V_t^{\pi^*,c^*} = \alpha \frac{ (X_t^{\pi^*,c^*})^{1-\gamma}  }{1-\gamma} e^{Y_t}, \qquad t \in [0,T].
\]	
In particular, 
$
	V_0^{\pi^*,c^*}=\alpha \frac{ x^{1-\gamma}  }{1-\gamma} e^{Y_0},
$	
from which we conclude that $(\pi^*,c^*)$ is an optimal strategy. 
 \end{remark}
 
 It remains to determine the driver $g$ of the BSDE \eqref{BSDE1}.  By It\^o's formula, 
 \begin{equation*}
 	\begin{split}
 		 & d \Big( \frac{ (X^{\pi,c}_t)^{1 -\gamma }}{1-\gamma} e^{Y_t} \Big) \\ 
		 =&~ \frac{ (X^{\pi,c}_t)^{1-\gamma} }{1-\gamma} e^{Y_t}\left(	-g_t+\frac{Z^2_t+(Z^0_t)^2}{2}	\right)\,dt +\frac{(X^{\pi,c}_t)^{1-\gamma}}{1-\gamma} e^{Y_t}Z_t\,dW_t+\frac{ (X^{\pi,c}_t)^{1-\gamma} }{1-\gamma}e^{Y_t} Z^0_t\,dW^0_t\\
 		&~+e^{Y_t} \left\{   	\pi_t h_t(X^{\pi,c}_t)^{1-\gamma} - c_t(X^{\pi,c}_t)^{1-\gamma}-\frac{\gamma}{2}(X^{\pi,c}_t)^{1-\gamma}\pi^2_t( \sigma^2_t+(\sigma^0_t)^2			)	  \right\}\,dt \\
		& ~ + e^{Y_t} (X^{\pi,c}_t)^{1-\gamma} \pi_t\left(  \sigma_t\,dW_t+\sigma^0_t\,dW^0_t   \right) +e^{Y_t} Z_t (X^{\pi,c}_t)^{1-\gamma}\pi_t\sigma_t\,dt + e^{Y_t} Z^0_t (X^{\pi,c}_t)^{1-\gamma}\pi_t\sigma^0_t\,dt,
 	\end{split}
 \end{equation*}
from which we see that  
 \begin{equation*}
 	\begin{split}
 		&~d\big( R^{\pi,c}_t \big)\\
 		=&~\alpha \frac{(X^{\pi,c}_t)^{1-\gamma}}{1-\gamma} e^{Y_t}\left\{  -g_t+\frac{Z^2_t+(Z^0_t)^2}{2}	\right. \\
		& \qquad \left. +(1-\gamma)\Big(  \pi_t h_t - c_t -\frac{\gamma}{2}\pi^2_t(\sigma^2_t+(\sigma^0_t)^2) 	+ Z_t\pi_t\sigma_t+ Z^0_t\pi_t\sigma^0_t		    \Big)	         \right\}\,dt \\
 		&~+\alpha\left(\frac{ (X^{\pi,c}_t)^{1-\gamma} }{1-\gamma}e^{Y_t}  	Z_t +  e^{Y_t}(X^{\pi,c}_t)^{1-\gamma}\pi_t\sigma_t	\right)	     \,dW_t+ \alpha\left(  \frac{ (X^{\pi,c}_t)^{1-\gamma} }{1-\gamma} e^{Y_t}   Z^0_t+ e^{Y_t}(X^{\pi,c}_t)^{1-\gamma} \pi_t\sigma^0_t   \right)\,dW^0_t\\
 		&~+\left\{    \frac{\delta(c_tX_t^{\pi,c} \nu_t^{-\theta})^{1-\frac{1}{\psi}}}{	1-\frac{1}{\psi}			}		\left\{   \alpha  (X^{\pi,c}_t)^{1-\gamma} 	e^{Y_t}	   \right\}^{1-\frac{1}{\vartheta}}		-\delta\vartheta\alpha \frac{ (X^{\pi,c}_t)^{1-\gamma} }{1-\gamma}		e^{Y_t}		          \right\}\,dt.
 	\end{split}
\end{equation*}
The driver of the above BSDE is strictly concave in the investment-consumption pair $(\pi_t,c_t)$. Setting its partial derivatives w.r.t.~$\pi_t$ and $c_t$ equal to zero, the pointwise maximizer on $[0,T)$ is given by
 \begin{equation}\label{consumption-optimization}
 	\begin{split}
 		\pi^* &~ =\frac{ h+\sigma Z+\sigma^0 Z^0}{\gamma(	\sigma^2+(\sigma^0)^2	)},  \\
 		c^* & ~ = \delta^\psi (\nu^{-\theta})^{\psi-1}(\alpha e^Y)^{-\frac{\psi}{\vartheta}}. 
 	\end{split}
 \end{equation}
 Taking these quantities back into the driver of $R^{\pi,c}$ and letting  
 \begin{equation}\label{driver-MOP}
 	\begin{split}
 		g:=&~\frac{Z^2+(Z^0)^2}{2}  +\frac{1-\gamma}{2\gamma} \frac{	(h+\sigma Z+\sigma^0Z^0)^2		}{\sigma^2+(\sigma^0)^2} +\frac{1-\gamma}{\psi-1} \delta^\psi(\nu^{-\theta})^{\psi-1} (\alpha e^Y)^{ -\frac{\psi}{\vartheta} } -\delta\vartheta,
 	\end{split}
 \end{equation}
we see that the driver of the BSDE \eqref{eqn:R} vanishes for the pair $(\pi^*,c^*)$ and is non-positive for any admissible pair $(\pi,c)$. Thus, the process $R^{\pi^*,c^*}$ is a martingale while the process $R^{\pi,c}$ is a supermartingale for any admissible strategy.  

\begin{remark}\label{remark:integrability-nu}
In equilibrium, we require $\nu^*_t=\exp\left(\mathbb E[\log C^*_t|\mathcal F^0_t]    \right):=\exp\left(\mathbb E[\log c^*_tX^*_t|\mathcal F^0_t]    \right)$ for all $t\in[0,T]$; refer to Definition \ref{def:equilibrium}. By Jensen's inequality, it holds that $\nu^*_t\leq \mathbb E[C^*_t|\mathcal F^0_t]$. Thus, given the admissible condition for consumption in Definition \ref{def:admissibility}, it is natural to assume that $\nu$ belongs to $\bigcap_{\beta\in\mathbb R}L^{\beta,T}_+$.  
\end{remark}

 
 \subsubsection{The fixed point condition}\label{sec:fixed-point}
By Definition \ref{def:equilibrium},
the fixed point condition reads
\begin{equation}\label{eq:fixed-point}
		\widehat\nu^*_t = \mathbb E[	\widehat c^*_t|\mathcal F^0_t		]+\mathbb E[	\widehat X^*_t|\mathcal F^0_t		],~0\leq t<T, \quad  \text{and }\quad \widehat\nu^*_T = \mathbb E[	\widehat X^*_T|\mathcal F^0_T	],
\end{equation}
where the process $c^*$ is determined in \eqref{consumption-optimization}. 
Moreover, for $0\leq t<T$,
in view of the optimality condition \eqref{consumption-optimization} the equilibrium consumption process satisfies 
 \begin{equation}\label{consumption-equilibrium-1}
 		c^*_t=\delta^\psi e^{  -\theta(\psi-1)\log\nu^*_t  } (\alpha e^{Y_t})^{ -\frac{\psi}{\vartheta} }=\delta^\psi e^{-\theta(\psi-1)\mathbb E[	\widehat c^*_t|\mathcal F^0_t	]} e^{-\theta(\psi-1)\mathbb E[\widehat X^*_t|\mathcal F^0_t]} (\alpha e^{Y_t})^{-\frac{\psi}{\vartheta}},
 \end{equation}
 which implies that the conditional expectation $\mathbb E[\widehat c^*_t|\mathcal F^0_t]$ satisfies the equation
 \begin{equation*}
 		\mathbb E[\widehat c^*_t|\mathcal F^0_t]=~\mathbb E[ \psi\log\delta	 ] - \mathbb E[\theta(\psi-1)]\mathbb E[	 \widehat c^*_t |\mathcal F^0_t	] - \mathbb E[\theta(\psi-1)]\mathbb E[\widehat X^*_t|\mathcal F^0_t] -  \mathbb E\left[\frac{\psi}{\vartheta}\log\alpha\right]  -  \mathbb E\left[ \left.\frac{\psi}{\vartheta} Y_t \right|\mathcal F^0_t\right].
\end{equation*}		
Thus, 
\[		
 	\mathbb E[\widehat c^*_t|\mathcal F^0_t]=~ \frac{1}{1+\mathbb E[\theta(\psi-1)]}\left\{       \mathbb E[ \psi\log\delta]-\mathbb E[ \theta(\psi-1) ]\mathbb E[\widehat X^*_t|\mathcal F^0_t]		- \mathbb E\left[ \frac{\psi}{\vartheta}\log\alpha  \right]	-\mathbb E\left[\left.  \frac{\psi}{\vartheta}Y_t  \right|\mathcal F^0_t \right]		\right\}
\]
and so
\begin{equation}\label{fixed-point-nu}
\begin{split}
	\widehat\nu^*_t & ~ =\mathbb E[\widehat X^*_t|\mathcal F^0_t] + \mathbb E[\widehat c^*_t|\mathcal F^0_t] \\
	& ~ =\frac{1}{1+\mathbb E[ \theta(\psi-1) ]}\left\{   \mathbb E[ \psi\log\delta ] - \mathbb E\left[ \frac{\psi}{\vartheta}\log\alpha   \right] - \mathbb E\left[ \left. \frac{\psi}{\vartheta} Y_t\right|\mathcal F^0_t   \right]        \right\} + \frac{1}{1+\mathbb E[ \theta(\psi-1)  ]}\mathbb E[ \widehat X^*_t|\mathcal F^0_t ].
\end{split}
\end{equation}
Define 	
\[
	\widetilde Y:=Y+\theta(1-\gamma)\mathbb E[\widehat X^*|\mathcal F^0].
\]
	This process satisfies the backward equation with zero terminal condition 
\begin{equation}\label{MF-BSDE-1}
	\begin{split}
		\widetilde Y_t =&~-\theta(1-\gamma)\int_t^T \mathbb E\left[\left.    \pi^*_sh_s-c^*_s-\frac{1}{2}(\pi^*_s)^2(\sigma^2_s+(\sigma^0_s)^2)	\right|\mathcal F^0_s	 \right]  \,ds \\
		&~+ \int_t^T  ~\left\{ \frac{Z^2_s+(Z^0_s)^2}{2} +\frac{1-\gamma}{2\gamma} \frac{	(h_s+\sigma_s Z_s+\sigma^0_sZ^0_s)^2		}{\sigma^2_s+(\sigma^0_s)^2}  \right. \\ 
		&~ \qquad \left. \qquad +\frac{1-\gamma}{\psi-1} \delta^\psi(\nu^*_s)^{-\theta(\psi-1)} (\alpha e^{Y_s})^{ -\frac{\psi}{\vartheta} } -\delta\vartheta\right\}\,ds\\
		&~ - \int_t^TZ_s\,dW_s-\int_t^T\left\{ Z^0_s+\theta(1-\gamma) \mathbb E[\pi^*_s\sigma^0_s |\mathcal F^0_s	] \right\} \,dW^0_s.
	\end{split}
\end{equation}
	To bring this equation into standard BSDE format we set
\begin{equation}\label{change:Z}
	\widetilde Z:=Z,\qquad \widetilde Z^0:=Z^0+\theta(1-\gamma)\mathbb E[\pi^*\sigma^0|\mathcal F^0]=Z^0+\theta(1-\gamma)\mathbb E\left[ \left.  \frac{h+\sigma Z+\sigma^0Z^0}{\gamma( \sigma^2+(\sigma^0)^2  )}\sigma^0   \right|\mathcal F^0       \right],
\end{equation}
where the second equality is equivalent to 
\begin{equation}\label{change:Z-0}
	Z^0=\widetilde Z^0-\frac{\theta(1-\gamma)\mathbb E\left[\left.     \frac{\sigma^0(  h+\sigma\widetilde Z+\sigma^0\widetilde Z^0   )}{\gamma(  \sigma^2+(\sigma^0)^2    )}		    \right|\mathcal F^0     \right]}{   1+\mathbb E\left[ \left.  	\frac{\theta(1-\gamma)(\sigma^0)^2}{\gamma( \sigma^2+(\sigma^0)^2  )}				  \right|\mathcal F^0   \right]   }.
\end{equation}
Moreover, using the definition of $\widetilde Y$, the equalities in \eqref{consumption-equilibrium-1} and \eqref{fixed-point-nu} imply that {on $[0,T)$}
\begin{equation}\label{equilibrium-consumption}
	\begin{split}
		c^* 
		=&~\delta^\psi \alpha^{ -\frac{  \psi }{\vartheta} }\exp\left(  -\theta(\psi-1)\widehat\nu^*-\frac{\psi}{\vartheta}Y							         \right)\\
		=&~\delta^\psi \alpha^{ -\frac{  \psi }{\vartheta} }\exp\left(  		-\frac{\theta(\psi-1)}{1+\mathbb E[ \theta(\psi-1) ]}\left\{   \mathbb E[\psi\log\delta]  - \mathbb E\left[ 	\frac{\psi}{\vartheta}\log\alpha	   \right]	    \right\}	+\frac{\theta(\psi-1)}{1+\mathbb E[\theta(\psi-1)]}	\mathbb E\left[ \left. \frac{\psi}{\vartheta}\widetilde Y	\right|\mathcal F^0	 \right]	-\frac{\psi}{\vartheta}\widetilde Y				         \right)\\
		=&~\exp\left(   -\frac{\theta(\psi-1)}{1+\mathbb E[ \theta(\psi-1) ]}\left\{        \mathbb E[\psi\log\delta]  - \mathbb E\left[ 	\frac{\psi}{\vartheta}\log\alpha	   \right]	  -\mathbb E\left[  \left.		\frac{\psi}{\vartheta}\widetilde Y \right|\mathcal F^0      \right]          \right\}		+\psi\log\delta-\frac{\psi}{\vartheta}\log\alpha-\frac{\psi}{\vartheta}\widetilde Y		          \right).
	\end{split}
\end{equation}

Taking the first equality in \eqref{change:Z} and \eqref{change:Z-0} into the first equality in \eqref{consumption-optimization}, we get the equilibrium investment process
     \begin{equation}\label{equilibrium-investment}
	\pi^*=\frac{ h+\sigma \widetilde Z + \sigma^0\widetilde Z^0	}{\gamma(	\sigma^2+(\sigma^0)^2	)}  -  \frac{	\sigma^0\theta(1-\gamma)\mathbb E\left[\left.     \frac{\sigma^0(  h+\sigma\widetilde Z+\sigma^0\widetilde Z^0   )}{\gamma(  \sigma^2+(\sigma^0)^2    )}		    \right|\mathcal F^0     \right]			}{\gamma(	\sigma^2+(\sigma^0)^2		)   \left(   1+\mathbb E\left[ \left.  	\frac{\theta(1-\gamma)(\sigma^0)^2}{\gamma( \sigma^2+(\sigma^0)^2  )}				  \right|\mathcal F^0   \right]  \right)  }
\end{equation}
Taking the first equality in \eqref{change:Z} and \eqref{change:Z-0} into \eqref{MF-BSDE-1}, we get 
   \begin{equation}\label{MF-BSDE:tilde-Y}
	\begin{split}
		\widetilde Y_t
		=&~  \int_t^T \Big\{ -\delta\vartheta +\theta(1-\gamma)\mathbb E[ c^*_s|\mathcal F^0_s	] + \frac{1-\gamma}{\psi-1}c^*_s\Big\}\,ds \\
		&~+\int_t^T\mathcal J_{\widetilde Z,\widetilde Z^0}(s)\,ds - \int_t^T\widetilde Z_s\,dW_s-\int_t^T\widetilde Z^0_s\,dW^0_s,
	\end{split}
\end{equation}
where $c^*$ is given by \eqref{equilibrium-consumption}, and $\mathcal J_{\widetilde Z,\widetilde Z^0}$ collects all terms containing $\widetilde Z$ and $\widetilde Z^0$, which is the same as \cite[Appendix B]{Fu2023}, with $\gamma$ there replaced by $1-\gamma$.

The following theorem summarizes the findings of this subsection.
\begin{theorem}\label{thm:BSDE-to-NE}
If the mean field BSDE \eqref{MF-BSDE:tilde-Y} admits a solution $(\widetilde Y,\widetilde Z, \widetilde Z^0)\in L^\infty\times H^2_{BMO}\times H^2_{BMO}$, then \eqref{equilibrium-consumption} and \eqref{equilibrium-investment} define candidate equilibrium rates $(\pi^*,c^*)\in H^2_{BMO}\times L^\infty_{\mathrm{log},+}$. 
If in addition, $(\pi^*,c^*)$ satisfies condition (iv) in $\mathcal A$, then they form a mean-field equilibrium.
\end{theorem}
 
 \begin{remark}
 	If $(\widetilde Z,\widetilde Z^0)\in L^\infty\times L^\infty$, which is the case in Section \ref{sec:existence}, then $(\pi^*,c^*)\in\mathcal A$. In particular, the dollar consumption $c^*X^*$ belongs to $\bigcap_{\beta\in\mathbb R}L^{\beta,T}_+$, thus, satisfies condition (iv) in $\mathcal A$.
 \end{remark}
 
 \begin{remark}
Theorem \ref{thm:BSDE-to-NE} focuses solely on equilibrium consumption and investment. The best response corresponding to a given $\nu$ can be analyzed in a similar manner.
 \end{remark}
 
\subsection{From Nash equilibrium to BSDE}\label{sec:necessary}

So far we derived a BSDE whose solution yields a mean-field equilibrium for our portfolio game. We proceed to prove that {\sl any} Nash equilibrium satisfies the previously obtained BSDE. To do so, it is crucial to establish a necessary characterization of the best response.

In what follows, we denote by $f_1$ and $f_2$ the derivatives of the aggregator $f$ w.r.t. the first and the second argument, respectively. In particular,  
\begin{equation}\label{eq:f_1}
	f_1(C,V) = \delta C^{-\frac{1}{\psi}}\left\{ (1-\gamma)V  \right\}^{ 1- \frac{1}{\vartheta} }
\end{equation}
and
\begin{equation}\label{eq:f_2}
	\begin{split}
	f_2(C,V)=&~\frac{	\delta C^{1-\frac{1}{\psi}}		}{1-\frac{1}{\psi}} \left(  1-\frac{1}{\vartheta} \right) (1-\gamma) \big((1-\gamma) V\big)^{-\frac{1}{\vartheta}}-\delta\vartheta \\
	=&~\delta C^{1-\frac{1}{\psi}}(\vartheta-1)\left\{  (1-\gamma)V    \right\}^{ -\frac{1}{\vartheta} } - \delta\vartheta.
	\end{split}
\end{equation}
For future reference, we denote the comparison-adjusted optimal consumption by
\begin{equation}\label{def:G*}
	G^* = c^*X^*\nu^{-\theta} = c^* e^{\widehat X^*}\nu^{-\theta}.
\end{equation}
Our necessary analysis is applied to the optimal log-wealth $\widehat X^*$, whose dynamics follows
\begin{equation}\label{eq:optimal-log-wealth}
    			d\widehat X^*_t = \left(  \pi^*_t h_t - c^*_t - \frac{1}{2}( \pi^*_t )^2\Sigma^2_t    \right)\,dt + \pi^*_t\sigma_t\,dW_t+\pi^*_t\sigma^0_t\,dW^0_t,\quad \widehat X^*_0=\log x.
\end{equation}
We start with the following necessary maximum principle for Epstein-Zin preferences.
\begin{proposition}\label{prop:maximum-principle}
Let $\nu\in\bigcap_{\beta\in\mathbb R}L^{\beta,T}_+$. 
	Let $(\pi^*,c^*):=(\pi^*(\nu),c^*(\nu))\in\mathcal A$ be an optimal control of \eqref{utility-max} and $(X^*,V^*):=(X^*(\nu),V^*(\nu))$ be the associated state process.  Define the following (well-posed) system of adjoint equations: 
	\begin{equation}\label{eq:adjoint}
		\left\{\begin{split}
			dQ_t=&~ f_2( G^*_t, V^*_t )Q_t\,dt,  \\
			-dP_t=&~ -Q_t f_1( G^*_t, V^*_t )G^*_t\,dt -K_t\,dW_t-K^0_t\,dW^0_t, \\
			 Q_0=&~-1,\quad P_T=-\alpha  Q_T e^{(1-\gamma)(\widehat X^*_T-\theta\widehat\nu_T)}.
		\end{split}\right.
	\end{equation}
	 Then, it holds that 
	\begin{equation}\label{optimality-1}
		Ph-P\Sigma^2\pi^*+\sigma K+ \sigma^0K^0=0.
	\end{equation}
and that
\begin{equation}\label{optimality-2}
	 Pc^*+Qf_1(G^*,V^*)G^*=0. 	
\end{equation}
\end{proposition}

\begin{proof}

	{\it Estimate for utility and adjoint variables.}
	We start with a series of a priori estimates.

We first consider the utility process. For any admissible pair $(\pi,c)$ it follows from  
		\cite[Theorem 4.6]{Seifried-2015} that
		\begin{equation}\label{power-bound}
			Y^{\gamma}	\leq	V^{\pi,c}\leq U_{\gamma}\circ U^{-1}_{\frac{1}{\psi}}( Y^{\frac{1}{\psi}} ),
		\end{equation}
		where (following the notation in \cite{Seifried-2015})
		\[
		Y^{\varrho}_t:=e^{\delta t}\mathbb E\left[\left. \int_t^T \delta e^{-\delta s} U_{\varrho}(C_s\nu^{-\theta}_s)\,ds +  e^{-\delta T}U_{\varrho}(C_T\nu^{-\theta}_T)   \right|\mathcal F_t   \right]   \quad   \textrm{ and }\quad U_{\varrho}(C):=\alpha\frac{C^{1-\varrho}}{1-\varrho}.          
		\] 
		
		Our assumptions on the consumption rate $c$ guarantee that the lower bound of the utility process $V^{\pi,c}$ belongs to $\bigcap_{\beta>0}  S^\beta $. Since 
		\[
		U_\gamma\circ U_{\frac{1}{\psi}}^{-1}(x)=\frac{\alpha^{1-\vartheta}}{1-\gamma} \left\{ \left(  1-\frac{1}{\psi}  \right)x\right\}^{  \vartheta }, 
		\]	
		the upper bound  also belongs to $\bigcap_{\beta>0} S^\beta $. Moreover, it follows from \eqref{power-bound}  that $(1-\gamma)V^{\pi,c}\geq 0$. Hence,
		\begin{equation}\label{bound:f2}
			f_2(C,V)\leq-\delta \vartheta. 
		\end{equation}
	 We proceed to bound the adjoint variables. By \eqref{eq:f_2}, the process $Q$ satisfies
		\begin{equation}\label{estimate:Q}
			0>Q_t=-\exp\left(  \int_0^t  f_2(  G^*_s, V^*_s    )\,ds   \right) \geq - e^{-\delta \vartheta t}. 
		\end{equation}
		By the admissible condition ($X^*_T=C_T^*$ and $C^*\in\bigcap_{\beta\in\mathbb R}L^{\beta,T}_+$), we deduce that 
		$
			P\in\bigcap_{\beta>0}S^\beta.
		$ 
As a result, $K\in\bigcap_{\beta>0}M^\beta$ and $K^0\in\bigcap_{\beta>0}M^\beta$.
        
	{\it First order variations and their estimate.}
	To obtain \eqref{optimality-1} and \eqref{optimality-2}, it is sufficient to perturbe the optimal strategy $(\pi^*,c^*)$ locally. For each $\epsilon>0$, each progressively measurable and {\it bounded} pair of perturbation $(\eta,\kappa)$, define
	\[
			\pi^\epsilon = \pi^*+\epsilon\eta,\qquad c^\epsilon = c^* e^{\epsilon\kappa}.
	\]
	In particular, $c^\epsilon$ is positive-valued and thus admissible. 
	Let $\widehat X^\epsilon$ be the log-wealth corresponding to $(\pi^\epsilon,c^\epsilon)$. Then,
	\begin{equation}\label{eq:log-wealth-epsilon}
			d\widehat X^\epsilon_t = \left(  \pi^\epsilon_t h_t - c^\epsilon_t - \frac{1}{2}( \pi^\epsilon_t )^2\Sigma^2_t    \right)\,dt + \pi^\epsilon_t\sigma_t\,dW_t+\pi^\epsilon_t\sigma^0_t\,dW^0_t,\quad \widehat X^\epsilon_0=\log x.
	\end{equation}
	By the boundedness of $\eta$, $\kappa$, $c^*$ and $\pi^*\in H^2_{BMO}$ as well as the SDE for $\widehat X^*$ in \eqref{eq:optimal-log-wealth}, the first order variation of log-wealth satisfies 
	\begin{equation}\label{variation-X}
		d\mathcal X_t = (\eta_t h_t-\kappa_t c^*_t-\eta_t\Sigma^2_t\pi^*_t )\,dt + \eta_t\sigma_t\,dW_t + \eta_t\sigma^0_t\,dW^0_t,\quad \mathcal X_0=0.
	\end{equation}
	where $\mathcal X:=\left.\frac{d}{d\epsilon}\widehat X^\epsilon \right\vert_{\epsilon=0}$. 
	
	By the BSDE representation for the recursive utilities $V^\epsilon$ and $V^*$, it holds 
	\begin{equation*}
		\begin{split}
			-d\frac{V^\epsilon_t - V^*_t}{\epsilon} =&~ \int_0^1 f_1\left( \lambda c^\epsilon_t e^{\widehat X^\epsilon_t} \nu^{-\theta}_t+(1-\lambda)	c^*_t e^{\widehat X^*_t}\nu^{-\theta}_t , \lambda V^\epsilon_t+(1-\lambda)V^*_t 	\right) \left(	\frac{c^\epsilon_t e^{\widehat X^\epsilon_t}-c^*_t e^{\widehat X^*_t}}{\epsilon}  \right)\nu^{-\theta}_t \,d\lambda  \,dt \\
			&~+ \int_0^1 f_2\left(     \lambda c^\epsilon_t e^{\widehat X^\epsilon_t} \nu^{-\theta}_t+(1-\lambda)	c^*_t e^{\widehat X^*_t}\nu^{-\theta}_t , \lambda V^\epsilon_t+(1-\lambda)V^*_t     \right)\left(\frac{V^\epsilon_t-V^*_t}{\epsilon}\right)\,d\lambda\,dt \\
			&~-\frac{\zeta^\epsilon_t - \zeta_t^*}{\epsilon} \,dW_t - \frac{\zeta^{0,\epsilon}_t - \zeta^{0,*}_t}{\epsilon} \,dW^0_t.
		\end{split}
	\end{equation*}
Using again $C^*\in\bigcap_{\beta\in\mathbb R}L^{\beta,T}_+$ and $\nu\in \bigcap_{\beta\in\mathbb R}L^{\beta,T}_+$, we can interchange the limit of $\epsilon\rightarrow 0$ and the integral to obtain 
\begin{equation}\label{variation-V}
	\left\{\begin{split}
	-dv_t =&~ \left\{  f_1(G^*_t,V^*_t)G^*_t(\mathcal X_t+\kappa_t ) + f_2(G^*_t,V^*_t) v_t \right\}\,dt - \zeta_t'\,dW_t - \zeta^{0\prime}_t\,dW^0_t, \\
	v_T=&~ \alpha \exp\left\{  (1-\gamma)( \widehat X^*_T-\theta\widehat\nu_T ) \right\}\mathcal X_T.
	\end{split}\right.
\end{equation}
Finally, $C^*\in\bigcap_{\beta\in\mathbb R}L^{\beta,T}_+$ and $\nu\in \bigcap_{\beta\in\mathbb R}L^{\beta,T}_+$ imply that
\[
		\mathcal X \in \bigcap_{\beta>0} S^\beta, \quad v\in \bigcap_{\beta>0} S^\beta,\quad \zeta'\in \bigcap_{\beta>0} M^{\beta},\quad \zeta^{0\prime} \in \bigcap_{\beta>0} M^\beta. 
\]
 {\it Integration by parts}. 
We apply an integration by parts to obtain that  
\begin{equation}\label{MP-int-by-part:1}
	\begin{split}
&~\alpha Q_T\exp\left\{ (1-\gamma)( \widehat X^*_T-\theta\widehat\nu_T)   \right\}\mathcal X_T\\
	=&~ v_TQ_T  = -v_0-\int_0^T Q_t f_1(G^*_t,V^*_t)G^*_t(\mathcal X_t+\kappa_t)\,dt + \int_0^T Q_t\zeta'_t\,dW_t + \int_0^TQ_t\zeta^{0\prime}_t\,dW^0_t 
	\end{split}
\end{equation}
and that 
\begin{equation}\label{MP-int-by-part:2}
	\begin{split}
		& ~  -\mathcal X_TQ_T\alpha e^{(1-\gamma)(\widehat X^*_T-\theta\widehat\nu_T)}  =\mathcal X_TP_T\\
		=&~\int_0^T \left\{	P_t(\eta_th_t - \kappa_t c^*_t -\eta_t\Sigma^2_t\pi^*_t  ) + \mathcal X_t Q_t f_1( G^*_t, V^*_t )G^*_t + \eta_t\sigma_tK_t + \eta_t \sigma^0_tK^0_t	\right\} \,dt  \\
		&~+ \int_0^T\left\{   P_t\eta_t\sigma_t+\mathcal X_t K_t		 \right\}\,dW_t + \int_0^T\left\{    	 P_t\eta_t\sigma^0_t+\mathcal X_t K^0_t			   \right\}\,dW^0_t .
	\end{split}
\end{equation}
The above estimates imply that all stochastic integrals in \eqref{MP-int-by-part:1} and \eqref{MP-int-by-part:2} are true martingales.
Summing up \eqref{MP-int-by-part:1} and \eqref{MP-int-by-part:2} shows that
\begin{equation*}
	\begin{split}
		0\geq v_0 =  \mathbb E\left[   \int_0^T \eta_t \left\{	P_t( h_t - \Sigma^2_t\pi^*_t  ) +  \sigma_tK_t +   \sigma^0_tK^0_t	\right\} \,dt  -\int_0^T \left\{  Q_t f_1(G^*_t,V^*_t)G^*_t  +  P_t   c^*_t  \right\}\kappa_t  \,dt      \right],
	\end{split}
\end{equation*} 
where the inequality is because of optimality.

  {\it Conclusion.} 
By choosing 
\[
	\eta_t=(	P_t( h_t - \Sigma^2_t\pi^*_t  ) +  \sigma_tK_t +   \sigma^0_tK^0_t ) \mathbf{1}_{  \{	|P_t( h_t - \Sigma^2_t\pi^*_t  ) +  \sigma_tK_t +   \sigma^0_tK^0_t| \leq m		 \}  } 
\]
and  
\[
	\kappa_t=-\left(Q_t f_1(G^*_t,V^*_t)G^*_t  +  P_t   c^*_t  \right)\mathbf{1}_{ \{  | Q_t f_1(G^*_t,V^*_t)G^*_t  +  P_t   c^*_t   |\leq m  \} }, 
\]
it holds that 
\begin{equation*}
	\begin{split}
	&~\left(	P_t( h_t - \Sigma^2_t\pi^*_t  ) +  \sigma_tK_t +   \sigma^0_tK^0_t		\right)^2 \mathbf{1}_{  \{	| P_t( h_t - \Sigma^2_t\pi^*_t  ) +  \sigma_tK_t +   \sigma^0_tK^0_t	| \leq m		 \}  } \\
	&~+ \left(Q_t f_1(G^*_t,V^*_t)G^*_t  +  P_t   c^*_t   \right)^2\mathbf{1}_{ \{  |Q_t f_1(G^*_t,V^*_t)G^*_t  +  P_t   c^*_t   |\leq m  \} }=0.
	\end{split}
\end{equation*}
Letting $m\rightarrow\infty$ yields \eqref{optimality-1} and \eqref{optimality-2}.
\end{proof}



The following corollary defines a consumption-adjusted adjoint variable $\overline P$, whose representation turns out to be useful in proving the one-to-one correspondence. 
\begin{corollary}\label{coro:overline-P}
	Define $\overline P_t=e^{\int_0^t c^*_r\,dr} P_t$, where $P$ is the adjoint variable in Proposition \ref{prop:maximum-principle}. Then there exists $(\lambda,\lambda^0)$ such that
	\begin{equation}\label{representation:overline-P}
			\overline P_t = \mathcal E\left(  \int_0^t \lambda_s\,dW_s + \int_0^t \lambda^0_s\,dW^0_s   \right).
	\end{equation}
As a result, 
\begin{equation}\label{representation:P}
	dP_t = -c^*_t P_t\,dt +P_t\lambda_t\,dW_t+P_t\lambda^0_t\,dW^0_t.
\end{equation}
\end{corollary}
\begin{proof}
	Taking \eqref{optimality-2} into the BSDE in \eqref{eq:adjoint}, we have 
	\[
		-dP_t=c^*_t P_t\,dt - K_t\,dW_t - K^0_t\,dW^0_t.
	\]
	The definition of $\overline P$ yields 
	\[
		d\overline P_t = e^{\int_0^t c^*_s\,ds} K_t\,dW_t + e^{\int_0^t c^*_s\,ds} K^0_t\,dW^0_t.
	\]
	Since $\overline P_T=-\alpha e^{\int_0^T c^*_s\,ds}  Q_T e^{(1-\gamma)(\widehat X^*_T-\theta\widehat\nu_T)}>0$ by \eqref{estimate:Q}, we know that $\overline P$ is a positive martingale. Positive martingale representation implies \eqref{representation:overline-P}. Thus,
	\[
			d\overline P_t = \overline P_t\lambda_t\,dW_t + \overline P_t\lambda^0_t\,dW^0_t,
	\]
	which implies $P\lambda=K$ and $P\lambda^0=K^0$. Taking them into the BSDE in \eqref{eq:adjoint} we obtain \eqref{representation:P}.
\end{proof}


We are now ready to show that any optimal control satisfies the BSDE with the driver \eqref{driver-MOP}. As a result, any equilibrium strategy must satisfy mean field BSDE \eqref{MF-BSDE:tilde-Y} determined in Section \ref{sec:fixed-point}.

\begin{theorem}\label{thm:necessary}
	Let $(\lambda,\lambda^0)\in H^2_{BMO}$, where $(\lambda,\lambda^0)$ is defined in Corollary \ref{coro:overline-P}.
	Let $(\pi^*,c^*)\in\mathcal A$ be a best response to $\nu\in \bigcap_{\beta\in\mathbb R}L^{\beta,T}_+$, and $(X^*,V^*)$ be the corresponding state process. Then, 
	 \begin{equation}\label{necessary-characterization}
		\begin{split}
			\pi^* = \frac{ h+\sigma Z+\sigma^0 Z^0			}{\gamma(	\sigma^2+(\sigma^0)^2	)} ,\quad 
			c^*  =  \delta^\psi (\nu^{-\theta})^{\psi-1}(\alpha e^Y)^{-\frac{\psi}{\vartheta}},
		\end{split}
	\end{equation}
where $(Y,Z,Z^0)$ satisfies the BSDE with the driver \eqref{driver-MOP} and terminal value $-\theta(1-\gamma)\log\nu_T$.
\end{theorem}
\begin{proof}
	
Define
\begin{equation}\label{def:Y-Z-Z0}
	\left\{\begin{split}
		Y =&~ \log P - \log\alpha - (1-\gamma)\widehat X^* - \log(-Q),\\
		Z = &~ \lambda - (1-\gamma)\pi^*\sigma,\\
		Z^0=&~ \lambda^0 - (1-\gamma)\pi^*\sigma^0.
	\end{split}\right. 
\end{equation}
We will verify $(Y,Z,Z^0)$ satisfies the desired BSDE.

The first equality in \eqref{def:Y-Z-Z0} implies $Y_T=-\theta(1-\gamma)\widehat\nu_T$.

Corollary \ref{coro:overline-P} (especially $K=P\lambda$ and $K^0=P\lambda^0$) and \eqref{optimality-1} imply 
\[
	Ph-P\Sigma^2\pi^*+\sigma P\lambda+\sigma^0 P\lambda^0=0 \quad \Rightarrow \quad  h - \Sigma^2\pi^*+\sigma\lambda + \sigma^0\lambda^0=0.
\]
Taking the second and the third equalities in \eqref{def:Y-Z-Z0} into the above equation yields 
\begin{equation*}
	h - \Sigma^2\pi^* + \sigma(	Z+(1-\gamma)\pi^*\sigma	) + \sigma^0(	Z^0+(1-\gamma)\pi^*\sigma^0	)=0,
\end{equation*}
which implies 
\begin{equation}\label{relation:pi*-candidate-Z}
		\pi^* = \frac{h+\sigma Z+\sigma^0Z^0}{\gamma\Sigma^2}. 
\end{equation}
From \eqref{eq:adjoint}, \eqref{representation:P} and \eqref{def:Y-Z-Z0}, we have $-dY_t = g_t\,dt-Z_t\,dW_t-Z^0_t\,dW^0_t$, where
\[
		g = \gamma c^* + f_2(G^*,V^*) + \frac{1}{2}(\lambda^2+(\lambda^0)^2) + (1-\gamma)\pi^* h -\frac{1}{2}(1-\gamma)(\pi^*)^2\Sigma^2.
\] 
Using \eqref{def:Y-Z-Z0} and \eqref{relation:pi*-candidate-Z}, the last three terms in $g$ satisfy
\[
		\frac{1}{2}\left( \lambda^2+(\lambda^0)^2   \right)^2 + (1-\gamma)\pi^* h - \frac{1}{2}(1-\gamma)(\pi^*)^2\Sigma^2= \frac{1}{2}\left( Z^2+(Z^0)^2   \right) + \frac{1-\gamma}{2\gamma} \frac{(h+\sigma Z+\sigma^0Z^0)^2}{\Sigma^2}.
\]
Thus,
\begin{equation}\label{candidate-driver}
	g=\gamma c^*+f_2(G^*,V^*) + \frac{1}{2}\left( Z^2+(Z^0)^2   \right) + \frac{1-\gamma}{2\gamma} \frac{(h+\sigma Z+\sigma^0Z^0)^2}{\Sigma^2}.
\end{equation}
Our goal is to prove \eqref{candidate-driver} coincides with \eqref{driver-MOP}. To do so, we will first prove  
\begin{equation}\label{def:mathcal-V}
	\mathcal V: = \frac{\alpha}{1-\gamma}e^{(1-\gamma)\widehat X^*+Y} = V^*
\end{equation}
and then establish
\begin{equation}\label{relation:c*-candidate-Y}
	c^* = \delta^\psi\nu^{-\theta(\psi-1)}(\alpha e^Y)^{-\frac{\psi}{\vartheta}}.
\end{equation}
We emphasize that $(\pi^*,c^*)$ in \eqref{relation:pi*-candidate-Z} and \eqref{relation:c*-candidate-Y} do not coincide with \eqref{necessary-characterization} before we prove $(Y,Z,Z^0)$ satisfies the desired BSDE.

By the definition of $\mathcal V$ in \eqref{def:mathcal-V}, 
\begin{equation}\label{terminal:mathcal-V-V*}
		\mathcal V_T = \frac{\alpha}{1-\gamma} e^{(1-\gamma)\widehat X^*_T-\theta(1-\gamma)\widehat\nu_T }=\frac{\alpha}{1-\gamma} e^{(1-\gamma)\log X^*_T+(1-\gamma)\log\nu^{-\theta}_T} = \alpha\frac{(	X^*_T\nu^{-\theta}_T	)^{1-\gamma}}{1-\gamma} = V^*_T.
\end{equation}
Using dynamics of $Y$ (see \eqref{candidate-driver} for its driver) and $\widehat X^*$ (see \eqref{eq:optimal-log-wealth}), we have 
\begin{equation}\label{BSDE:mathcal-V}
	\begin{split}
		-d\mathcal V_t = \mathcal V_t\left(    	c^*_t+f_2(G^*_t,V^*_t)								    \right)\,dt - \mathcal V_t\lambda_t\,dW_t - \mathcal V_t\lambda^0_t\,dW^0_t.
	\end{split}
\end{equation}
In order to compare the driver of $\mathcal V$ and the driver of $V^*$, we consider rewriting the latter one, $f(G^*,V^*)$. At this stage, the second optimality condition \eqref{optimality-2} becomes instrumental. On the one hand, the definition of $Y$ in \eqref{def:Y-Z-Z0} and the definition of $\mathcal V$ in \eqref{def:mathcal-V} imply that 
\begin{equation}\label{eq:mathcal-V-PQ}
	\begin{split}
	e^Y = \frac{1}{\alpha} \left(\frac{P}{-Q}\right) e^{-(1-\gamma)\widehat X^*} \quad \Rightarrow \quad  (1-\gamma)\mathcal V =  \alpha  e^{Y+(1-\gamma)\widehat X^*} = -\frac{P}{Q}.
	\end{split}
\end{equation}
On the other hand, the second optimality condition \eqref{optimality-2} implies 
\begin{equation}\label{eq:f1-G*-1}
	f_1(G^*,V^*) G^* = c^*\left( -\frac{P}{Q} \right) = (1-\gamma)c^*\mathcal V,
\end{equation}
where the second equality is given by \eqref{eq:mathcal-V-PQ}.
By \eqref{eq:f_1}, 
\begin{equation}\label{eq:f1-G*-2}
	f_1(G^*,V^*)G^* = \delta (G^*)^{1-\frac{1}{\psi}}\left\{   	(1-\gamma)V^*	  \right\}^{1-\frac{1}{\vartheta}}.
\end{equation}
Combining \eqref{eq:f1-G*-1} and \eqref{eq:f1-G*-2}, we have 
\begin{equation}\label{eq:c*-mathcal-V}
	c^*\mathcal V = \frac{\delta}{1-\gamma}(G^*)^{1-\frac{1}{\psi}} \left\{     (1-\gamma)V^*   \right\}^{ 1-\frac{1}{\vartheta} } = \delta(G^*)^{1-\frac{1}{\psi}} V^*\left\{  (1-\gamma)V^* \right\}^{-\frac{1}{\vartheta}}.
\end{equation}
By \eqref{eq:f_2} and \eqref{eq:c*-mathcal-V}, we have
\begin{equation}\label{eq:f-f2}
		f(G^*,V^*) = c^*\mathcal V+f_2(G^*,V^*)V^*.
\end{equation}
Taking \eqref{eq:f-f2} into the BSDE for $V^*$ and considering the difference $\mathcal V-V^*$, we have 
\begin{equation}\label{BSDE:mathcal-V-V*}
	-d(\mathcal V_t - V^*_t) = f_2(G^*_t,V^*_t)(	\mathcal V_t - V^*_t )\,dt + \text{local martingale},
\end{equation}
whose unique solution is zero by \eqref{terminal:mathcal-V-V*} and Lemma \ref{lemma:BSDE-zero}, stated below. Thus, \eqref{def:mathcal-V} is proved. 

To prove \eqref{relation:c*-candidate-Y}, we take $\mathcal V=V^*$ into \eqref{eq:c*-mathcal-V} to obtain 
\begin{equation}\label{relation:c*-V*}
	c^*=\delta (G^*)^{1-\frac{1}{\psi}} \left\{  (1-\gamma)V^* \right\}^{-\frac{1}{\vartheta}}.
\end{equation}
Taking the definiton of $G^*$ in \eqref{def:G*} into \eqref{relation:c*-V*} gives \eqref{relation:c*-candidate-Y}.

Finally, by \eqref{relation:c*-V*} and \eqref{eq:f-f2}, we have
\[
	(\vartheta-1)c^* = f_2(G^*,V^*)+\delta\vartheta, 
\]
which implies that 
\[
		\gamma c^* + f_2(G^*,V^*) = \gamma c^*+(\vartheta-1)c^*-\delta\vartheta =\frac{1-\gamma}{\psi-1} c^* - \delta\vartheta. 
\]
Taking the above equality into \eqref{candidate-driver}, we have
\[
	g = \frac{1-\gamma}{\psi-1}c^* - \delta\vartheta + \frac{1}{2}(Z^2+(Z^0)^2) + \frac{1-\gamma}{2\gamma} \frac{(h+\sigma Z+\sigma^0Z^0)^2}{\Sigma^2},
\]
which coincides with \eqref{driver-MOP} by using \eqref{relation:c*-candidate-Y}.
 \end{proof}
The BSDE \eqref{BSDE:mathcal-V-V*} is a linear BSDE, however, its driver is only bounded from above (see \eqref{bound:f2}). Thus, standard result in e.g. \cite[Proposition 6.2.1]{Pham-2008} does not immediately imply that the solution is identical to zero. We need the following lemma.
\begin{lemma}\label{lemma:BSDE-zero}
	Let assumptions in Theorem \ref{thm:necessary} hold. The unique solution to \eqref{BSDE:mathcal-V-V*} is zero. 
\end{lemma}
\begin{proof}
	{First, we prove the sequence $\{ Q_\tau\mathcal V_\tau, \text{ stopping time }\tau  \}$ is uniformly integrable, where $Q$ is the forward adjoint variable in \eqref{eq:adjoint}.} Indeed, define \[ 
		\overline{\mathcal V}_t=\exp\left\{   \int_0^t  c^*_s + f_2(G^*_s,V^*_s)	\,ds	 \right\}\mathcal V_t. 
	\]
	Then it satisfies $d\overline{\mathcal V}_t=\overline{\mathcal V}_t\lambda_t\,dW_t+\overline{\mathcal V}_t\lambda^0_t\,dW^0_t$, which implies $\overline{\mathcal V}_t=\overline{\mathcal V}_0\mathcal E\left(   \int_0^t\lambda_s\,dW_s + \int_0^t\lambda^0_s\,dW^0_s     \right)$. By assumptions in Theorem \ref{thm:necessary}, in particular, $(\lambda,\lambda^0)\in H^2_{BMO}$, the process $\overline{\mathcal V}$ satisfies reverse H\"older's inequality for some $p>1$ (see \cite[Theorem 3.1]{Kazamaki-2006}). Consequently, $\{ \overline{\mathcal V}_\tau, \text{ stopping time }\tau \}$ is uniformly integrable. Since $ c^*\in L^\infty_{\text{log},+}$, $\{ Q_\tau\mathcal V_\tau, \text{ stopping time }\tau  \}$ is also uniformly integrable.
	
	Next, let $\tau_n\uparrow T$ be a sequence of localization stopping times such that the local martingale in \eqref{BSDE:mathcal-V-V*} is a true martingale. We have 
	\[
			\mathbb E[Q_{\tau_n} ( \mathcal V_{\tau_n}	-V^*_{\tau_n}	) |\mathcal F_t ] = Q_t(\mathcal V_t-V^*_t).
	\]
	Since $V^*\in\bigcap_{\beta>0}S^\beta$, $Q$ is bounded by \eqref{estimate:Q} and $Q\mathcal V$ is uniformly integrable, we have
	\[
			Q_t(\mathcal V_t-V^*_t)=\lim_{n\rightarrow \infty}\mathbb E[Q_{\tau_n} ( \mathcal V_{\tau_n}	-V^*_{\tau_n}	) |\mathcal F_t ]=\mathbb E[Q_{T} ( \mathcal V_{T}	-V^*_{T}	) |\mathcal F_t ] = 0,
	\]
	which yields the desired result.
\end{proof}

 By Theorem \ref{thm:BSDE-to-NE} and Theorem \ref{thm:necessary}, 
 we have the following one-to-one correspondence between the equilibrium investment and consumption rates and the solution to the BSDE \eqref{MF-BSDE:tilde-Y}. 
    \begin{theorem}\label{thm:one-to-one}
    	Define
    	\[
    		\mathscr C_{adm} = \{  (\widetilde Y,\widetilde Z,\widetilde Z^0)\in L^\infty\times H^2_{BMO}\times H^2_{BMO}:   (\widetilde Y,\widetilde Z,\widetilde Z^0) \text{ is a solution to } \eqref{MF-BSDE:tilde-Y}     \text{ and the generated }(\pi^*,c^*)\in\mathcal A          \}
    	\]
    	and 
    	\[
    		\mathscr E_{adm} = \{	(\pi^*,c^*)\in \mathcal A: (\pi^*,c^*)	\text{ is an equilibrium rate and the generated }(\lambda,\lambda^0)\in H^2_{BMO}							\}.
    	\]
  Then, there is a one-to-one correspondence between $\mathscr C_{adm}$ and $\mathscr E_{adm}$.

The relation is given by
     \begin{equation}\label{pi-Z-Z0-widetilde}
   		\pi^*=\frac{ h+\sigma \widetilde Z + \sigma^0\widetilde Z^0	}{\gamma(	\sigma^2+(\sigma^0)^2	)}  -  \frac{	\sigma^0\theta(1-\gamma)\mathbb E\left[\left.     \frac{\sigma^0(  h+\sigma\widetilde Z+\sigma^0\widetilde Z^0   )}{\gamma(  \sigma^2+(\sigma^0)^2    )}		    \right|\mathcal F^0     \right]			}{\gamma(	\sigma^2+(\sigma^0)^2		)   \left(   1+\mathbb E\left[ \left.  	\frac{\theta(1-\gamma)(\sigma^0)^2}{\gamma( \sigma^2+(\sigma^0)^2  )}				  \right|\mathcal F^0   \right]  \right)  }
   \end{equation}
   and
   \begin{equation}\label{c-Y-widetilde}
   	\left\{\begin{split}
   		c^*
   		=&~\exp\left(   -\frac{\theta(\psi-1)}{1+\mathbb E[ \theta(\psi-1) ]}\left\{        \mathbb E[\psi\log\delta]  - \mathbb E\left[ 	\frac{\psi}{\vartheta}\log\alpha	   \right]	  -\mathbb E\left[  \left.		\frac{\psi}{\vartheta}\widetilde Y \right|\mathcal F^0      \right]          \right\}	\right. \\
		& \qquad \qquad \left.	+\psi\log\delta-\frac{\psi}{\vartheta}\log\alpha-\frac{\psi}{\vartheta}\widetilde Y		          \right)\quad \text{on }[0,T),\\
		c^*_T=&~1.
   	\end{split}\right.
   \end{equation}
\end{theorem}

  \begin{proof} 
  	
  	$\mathscr C_{adm}\Rightarrow \mathscr E_{adm}$. For any $(\widetilde Y,\widetilde Z,\widetilde Z^0)\in \mathscr C_{adm}$, Theorem \ref{thm:BSDE-to-NE} yields $(\pi^*,c^*)$ defined by \eqref{pi-Z-Z0-widetilde} and \eqref{c-Y-widetilde}. By \eqref{def:Y-Z-Z0}, $(\lambda,\lambda^0)\in H^2_{BMO}\times H^2_{BMO}$. Thus, $(\pi^*,c^*)\in\mathscr E_{adm}$.

$\mathscr E_{adm}\Rightarrow \mathscr C_{adm}$. For any $(\pi^*,c^*)\in \mathscr E_{adm}$, by Remark \ref{remark:integrability-nu}, $\nu^*\in\bigcap_{\beta\in\mathbb R}L^{\beta,T}_+$. Thus, by Theorem \ref{thm:necessary} and equilibrium condition \eqref{eq:fixed-point}, $(\pi^*,c^*)\in\mathcal A$ must be characterized by the solution to the following mean field FBSDE system
 	 \begin{equation}\label{MF-FBSDE}
 		\left\{\begin{split}
 			dX_t^{*}=&~X^{*}_t\Big(\pi^*_t h_t	\,dt+\pi^*_t\sigma_t\,dW_t	+\pi^*_t\sigma^0_t\,dW^0_t		\Big) - c^*_tX^{*}_t\,dt\\
 			-dY_t=&~\left\{ \frac{Z^2_t+(Z^0_t)^2}{2} + \frac{1-\gamma}{2\gamma} \frac{	(h_t+\sigma_t Z_t+\sigma^0_tZ^0_t)^2		}{\sigma^2_t+(\sigma^0_t)^2} \right.  \\
 			&~\left.+\frac{1-\gamma}{\psi-1} \delta^\psi(\nu^*_t)^{-\theta(\psi-1)} (\alpha e^{Y_t})^{ -\frac{\psi}{\vartheta} } -\delta\vartheta\right\}\,dt - Z_t\,dW_t-Z^0_t\,dW^0_t\\
 			X^*_0=&~x,\quad Y_T=-\theta(1-\gamma)\mathbb E[\widehat X^*_T|\mathcal F^0_T],
 		\end{split}\right. 
 	\end{equation}
 	where 
 	\begin{equation*}\label{equilibrium-pi-c}
 		\begin{split}
 			\pi^*=\frac{ h+\sigma Z+\sigma^0 Z^0			}{\gamma(	\sigma^2+(\sigma^0)^2	)},\qquad c^*=\delta^\psi (\nu^*)^{-\theta(\psi-1)}(\alpha e^Y)^{-\frac{\psi}{\vartheta}},
 		\end{split}
 	\end{equation*}
 	and
 	\begin{equation*}\label{eq:log-nu*}
 	\widehat\nu^*=\frac{1}{1+\mathbb E[ \theta(\psi-1) ]}\left\{   \mathbb E[ \psi\log\delta ] - \mathbb E\left[ \frac{\psi}{\vartheta}\log\alpha   \right] - \mathbb E\left[ \left. 		\frac{\psi}{\vartheta} Y\right|\mathcal F^0   \right]        \right\} + \frac{1}{1+\mathbb E[ \theta(\psi-1)  ]}\mathbb E[ \widehat X^*|\mathcal F^0 ].
 	\end{equation*}
 	By the same argument from \eqref{eq:fixed-point} to \eqref{MF-BSDE:tilde-Y}, $(\pi^*,c^*)$ satisfies \eqref{pi-Z-Z0-widetilde} and \eqref{c-Y-widetilde}, with $(\widetilde Y,\widetilde Z,\widetilde Z^0)$ satisfies \eqref{MF-BSDE:tilde-Y}. 
 	
 	The relation \eqref{c-Y-widetilde} implies that 
 	\[
 		\widetilde Y= 	-\frac{\vartheta}{\psi}\log c^*-\frac{\theta\vartheta(\psi-1)}{\psi}\mathbb E[  \log c^* |\mathcal F^0 ]+\vartheta\log\delta-\log\alpha,			\quad \text{on }[0,T)
 	\]
 	which together with $\widetilde Y_T=0$ implies $\widetilde Y\in L^\infty$ since $ c^*\in L^\infty_{\text{log},+}$. Moreover, by \eqref{def:Y-Z-Z0} and \eqref{change:Z}-\eqref{change:Z-0}, we have $(\widetilde Z,\widetilde Z^0)\in H^2_{BMO}\times H^2_{BMO}$. Thus, $(\widetilde Y,\widetilde Z,\widetilde Z^0)\in \mathscr C_{adm}$. 
\end{proof}

 \begin{remark}\label{rem:reverse}
 	The BMO condition for $(\lambda,\lambda^0)$ is consistent with the reverse H\"older inequality in \cite{ET-2015,FR-2011,Fu-2021,Fu2023}, where time additive utility (exponential or power utility) is considered. In general, this condition is necessasry to establish the one-to-one correspondence between the equilibrium investment rate and the $Z$-component of solution to some (F)BSDE in the BMO space. This condition can be dropped if either the following two conditions holds: 
	\begin{itemize}
 		\item If only common noise exists, i.e. $\sigma=0$, then \eqref{pi-Z-Z0-widetilde} yields a one-to-one correspondence between $\pi^*$ and $\widetilde Z^0$ (and thus $Z^0$), which implies that $\pi^*\in H^2_{BMO}$ is equivalent to $\widetilde Z^0\in H^2_{BMO}$, even without the BMO condition for $\lambda$. This also implies the equivalence between $\pi^*\in L^\infty$ and $Z^0\in L^\infty$.
 	\item  If all coefficients are deterministic, then $\widetilde Z=\widetilde Z^0=0$ and $\pi^*\in H^2_{BMO}$ trivially holds, if it exists.
\end{itemize}
 \end{remark}
To further demonstrate the power of our approach, in Section \ref{sec:power} we revisit the power utility case studied in \cite{Fu-2021,Fu2023}. Before doing so, we complete the uniqueness analysis and the $N$-player game analysis in the next two sections.  
  
\section{Uniqueness of mean-field equilibrium strategy in $H^2_{BMO}\times L^\infty$}\label{sec:uniqueness}
 
 By Theorem \ref{thm:one-to-one}, to establish our uniqueness of equilibrium result, it is sufficient and necessary to show that the BSDE \eqref{MF-BSDE:tilde-Y} admits at most one solution. 

\begin{theorem}\label{thm:uniqueness} 
	For each $R>0$, there exists at most one equilibrium investment and consumption rates $(\pi^*,c^*)\in H^2_{R,BMO}\times L^\infty$ such that the generated $(\lambda,\lambda^0)\in H^2_{BMO}\times H^2_{BMO}$, when the competition parameter $\theta$ is small enough. Here, $H^2_{R, BMO}$ is the $R$-ball of $H^2_{BMO}$.
%
\end{theorem}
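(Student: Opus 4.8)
By Theorem~\ref{thm:one-to-one}, the statement is equivalent to showing that the mean field BSDE \eqref{MF-BSDE:tilde-Y} has at most one solution $(\widetilde Y,\widetilde Z,\widetilde Z^0)\in L^\infty\times H^2_{BMO}\times H^2_{BMO}$ whose associated investment rate \eqref{pi-Z-Z0-widetilde} lies in $H^2_{R,BMO}$, once $\theta$ is small in terms of $R$ and the fixed bounds on $r,h,\sigma,\sigma^0$. The plan is: \emph{(i)} record a priori bounds confining any such solution to a fixed $L^\infty$-ball for $\widetilde Y$ and a fixed BMO-ball for $(\widetilde Z,\widetilde Z^0)$, with radii depending on $R$ but \emph{not} on $\theta$; \emph{(ii)} linearise the difference of two solutions, separating the single-agent part from the mean field coupling, which carries a factor $\theta$; \emph{(iii)} remove the quadratic terms by a BMO change of measure and conclude by a Gronwall-type estimate in which the smallness of $\theta$ makes the mean field contribution strictly contractive. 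For \emph{(i)}: from \eqref{pi-Z-Z0-widetilde} and the boundedness of the coefficients, $\pi^*\in H^2_{R,BMO}$ confines $(\widetilde Z,\widetilde Z^0)$ to a ball of $H^2_{BMO}$ of radius $R'=R'(R)$ (the $\theta$-dependent correction in \eqref{pi-Z-Z0-widetilde} being bounded uniformly in $\theta\in[0,1]$); since $\mathcal{J}_{\widetilde Z,\widetilde Z^0}$ has at most quadratic growth in $(\widetilde Z,\widetilde Z^0)$ --- it is precisely the generator of \cite[Appendix B]{Fu2023} with $\gamma$ replaced by $1-\gamma$ --- while the remaining part of the generator of \eqref{MF-BSDE:tilde-Y} is governed by the bounded consumption rate, a comparison argument together with $\widetilde Y_T=0$ yields $\|\widetilde Y\|_{L^\infty}\le M=M(R)$, again uniformly in $\theta\in[0,1]$.

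\textbf{The difference BSDE.} Let $(\widetilde Y^j,\widetilde Z^j,\widetilde Z^{0,j})$, $j=1,2$, be two such solutions and write $\Delta Y=\widetilde Y^1-\widetilde Y^2$ and $\Delta\widetilde Z=(\widetilde Z^1-\widetilde Z^2,\widetilde Z^{0,1}-\widetilde Z^{0,2})$. Then $\Delta Y$ solves a BSDE with zero terminal value and generator
\[
\frac{1-\gamma}{\psi-1}\,\Delta c^*_t+\theta(1-\gamma)\,\mathbb E\!\left[\Delta c^*_t\,\middle|\,\mathcal F^0_t\right]+\big(\mathcal{J}_{\widetilde Z^1_t,\widetilde Z^{0,1}_t}-\mathcal{J}_{\widetilde Z^2_t,\widetilde Z^{0,2}_t}\big),
\]
$\Delta c^*$ being the difference of the consumption rates in \eqref{equilibrium-consumption}. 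From \eqref{equilibrium-consumption}, $c^*$ is a locally Lipschitz function of $\big(\widetilde Y_t,\mathbb E[\widetilde Y_t|\mathcal F^0_t]\big)$ whose Lipschitz constant on $[-M,M]$ depends only on $M,\psi,\widetilde\theta$ and whose dependence on the conditional mean carries an explicit factor $\theta$; hence $|\Delta c^*_t|\le L|\Delta Y_t|+L\theta\,\mathbb E[|\Delta Y_t|\,|\,\mathcal F^0_t]$, and similarly for $\mathbb E[\Delta c^*_t|\mathcal F^0_t]$. For the quadratic part I would write
\[
\mathcal{J}_{\widetilde Z^1_t,\widetilde Z^{0,1}_t}-\mathcal{J}_{\widetilde Z^2_t,\widetilde Z^{0,2}_t}=\mu_t^{\top}\Delta\widetilde Z_t+\theta\,\mathcal{R}_t ,
\]
where, using that the $\mathcal F^0$-coupling entering $\mathcal{J}$ through \eqref{change:Z-0} also carries a factor $\theta$, one has $|\mu_t|\le C\big(1+\textstyle\sum_j(|\widetilde Z^j_t|+|\widetilde Z^{0,j}_t|)\big)$, so $\mu\in H^2_{BMO}$ with $\|\mu\|_{BMO}\le C(1+R')$, and $|\mathcal{R}_t|\le C\big(1+\textstyle\sum_j(|\widetilde Z^j_t|+|\widetilde Z^{0,j}_t|)\big)\big(|\Delta\widetilde Z_t|+\mathbb E[|\Delta\widetilde Z_t|\,|\,\mathcal F^0_t]\big)$.

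\textbf{Change of measure and conclusion.} Since $\mu\in H^2_{BMO}$, $\mathcal E\!\big(\int\mu^{\top}\,d\widetilde W\big)$ defines an equivalent measure $\mathbb Q$ under which $\widetilde W-\int\mu\,ds$ is a Brownian motion, the processes $\widetilde Z^j,\widetilde Z^{0,j},\Delta\widetilde Z$ remain BMO with norms controlled by $R$, and $\mu_t^{\top}\Delta\widetilde Z_t$ is absorbed into the martingale part. Writing the remaining drift as $a_t\Delta Y_t+b_t$ --- with $a_t$ a bounded process coming from the mean-value form of $\frac{1-\gamma}{\psi-1}\Delta c^*_t$ in $\widetilde Y$ and $b_t$ collecting only the $\theta$-small terms --- the integrating factor $e^{\int_t^s a_r\,dr}$ removes the (bounded but not small) linear term $a_t\Delta Y_t$, giving $\Delta Y_t=\mathbb E^{\mathbb Q}\big[\int_t^T e^{\int_t^s a_r\,dr}\,b_s\,ds\,\big|\,\mathcal F_t\big]$. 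Feeding in $|b_s|\le C\theta\|\Delta Y\|_{L^\infty}+C\theta\big(1+\sum_j(|\widetilde Z^j_s|+|\widetilde Z^{0,j}_s|)\big)\big(|\Delta\widetilde Z_s|+\mathbb E[|\Delta\widetilde Z_s|\,|\,\mathcal F^0_s]\big)$, the a priori bound $\|\Delta\widetilde Z\|_{BMO(\mathbb Q)}\le C(R)\|\Delta Y\|_{L^\infty}$ (obtained from It\^o's formula for $|\Delta Y|^2$ under $\mathbb Q$ and the estimates above), Cauchy--Schwarz and standard BMO estimates, and controlling the Radon--Nikodym density to pass between $\mathbb P$- and $\mathbb Q$-conditional expectations, one arrives at $\|\Delta Y\|_{L^\infty}\le\theta\,C(R,T)\,\|\Delta Y\|_{L^\infty}$. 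Choosing $\theta$ with $\theta\,C(R,T)<1$ forces $\Delta Y\equiv 0$, hence $\Delta\widetilde Z\equiv 0$, so the solution --- and by Theorem~\ref{thm:one-to-one} the equilibrium --- is unique. The hard part is twofold: establishing the $\theta$-independent a priori radii in step \emph{(i)}, so that the threshold on $\theta$ can be fixed afterwards in terms of $R$ alone; and, in step \emph{(iii)}, simultaneously controlling the quadratic-in-$Z$ structure of $\mathcal{J}$ and the $\mathcal F^0$-conditional mean field coupling --- in particular the book-keeping between $\mathbb P$- and $\mathbb Q$-conditional expectations after the change of measure --- noting that the genuinely non-small linear consumption term must be absorbed by the integrating factor (or by iterating over short subintervals) rather than by the smallness of $\theta$, which is reserved for the mean field terms.
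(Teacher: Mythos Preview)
Your proposal is correct and follows essentially the same strategy as the paper's proof: reduce to uniqueness of the mean field BSDE \eqref{MF-BSDE:tilde-Y} via Theorem~\ref{thm:one-to-one}, linearise the difference of two solutions, observe that the only terms in $\Delta\mathcal J$ which are \emph{not} linear in $(\Delta Z,\Delta Z^0)$ carry a factor $\theta$, absorb the linear-in-$(\Delta Z,\Delta Z^0)$ part by a BMO change of measure, and close by an estimate in which the smallness of $\theta$ controls the residual mean field coupling.

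The organisational differences are minor. The paper applies It\^o's formula to $(\Delta Y)^2$ under $\mathbb Q$ and obtains a single inequality
\[
\esssup_{\omega,s\geq t}(\Delta Y_s)^2+\|\Delta Z\|^2_{BMO,\mathbb Q}+\|\Delta Z^0\|^2_{BMO,\mathbb Q}\le C\int_t^T\esssup_{\omega,r\le s}(\Delta Y_r)^2\,ds+\theta C\big(\|\Delta Z\|^2_{BMO,\mathbb Q}+\|\Delta Z^0\|^2_{BMO,\mathbb Q}\big),
\]
so that the non-small consumption term $\frac{1-\gamma}{\psi-1}\Delta c^*$ (which is bounded-times-$\Delta Y$) is handled by Gr\"onwall in time rather than by an integrating factor; the $\theta$-smallness is used only to absorb the $\Delta Z$-terms. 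Your two-step variant---first bounding $\|\Delta\widetilde Z\|_{BMO(\mathbb Q)}$ by $\|\Delta Y\|_{L^\infty}$, then removing the large linear-in-$\Delta Y$ term by $e^{\int a}$ and arriving at $\|\Delta Y\|_{L^\infty}\le\theta C(R,T)\|\Delta Y\|_{L^\infty}$---is an equally valid packaging of the same estimates. Your explicit a priori step \emph{(i)} and your acknowledgement of the $\mathbb P$/$\mathbb Q$ bookkeeping for the $\mathcal F^0$-conditional terms are points the paper leaves implicit (it simply records ``$C$ depends on $R$'' and passes directly from the $\mathbb P$-conditional expectations in $\Delta\widetilde{\mathcal J}$ to $\mathbb Q$-BMO norms), so in that respect your sketch is slightly more careful.
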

\begin{proof}
		Let $(\widetilde Y,\widetilde Z,\widetilde Z^0)$ and $(\widetilde Y',\widetilde Z',\widetilde Z^{0\prime})$ be solutions to the BSDE \eqref{MF-BSDE:tilde-Y} in $L^\infty\times H^2_{BMO}\times H^2_{BMO}$ and let
	\[
	\Delta Y:=\widetilde Y-\widetilde Y', \quad \Delta Z:=\widetilde Z-\widetilde Z', \quad \Delta Z^0:=\widetilde Z^0-\widetilde Z^{0\prime}. 
	\]
It follows that
\begin{equation*}
	\begin{split}
		\Delta Y_t=\theta(1-\gamma)\int_t^T\mathbb E[ \Delta c^*_s |\mathcal F^0_s ]\,ds + \int_t^T\frac{1-\gamma}{\psi-1}  \Delta c^*_s  \,ds+\int_t^T \Delta\mathcal J_s \,ds-\int_t^T\Delta Z_s\,dW_s-\int_t^T\Delta Z^0_s\,dW^0_s,
	\end{split}
\end{equation*}
where 
\begin{equation*}
	\Delta c^*=H \left( \frac{\theta(\psi-1)}{1+\mathbb E[\theta(\psi-1)]}\mathbb E\left[\left. \frac{\psi}{\vartheta}\Delta Y \right|\mathcal F^0   \right] - \frac{\psi}{\vartheta}\Delta Y \right)
\end{equation*}
with $H$ being bounded since $\widetilde Y,\widetilde Y'\in L^\infty$, and
where 
{\scriptsize  
	\begin{align*}
			\Delta \mathcal J =&~ -\theta\gamma\mathbb E\left[\left.  f^{\sigma h}\Delta Z+f^{\sigma^0h}\Delta Z^0\right|\mathcal F^0   \right]+\theta\gamma\mathbb E\left[\left. \theta\gamma f^{\sigma^0h} \right|\mathcal F^0    \right]  \frac{	\mathbb E\left[ \left.  f^{\sigma^0\sigma}\Delta Z+f^{\sigma^0\sigma^0}\Delta Z^0  \right|\mathcal F^0 	\right]				}{1+\mathbb E[\theta\gamma f^{\sigma^0\sigma^0}|\mathcal F^0]}\\
			&~+\theta\gamma\mathbb E\left[ 	\frac{1}{2}(1-\gamma)\Sigma^2 \left\{    2f^h+f^\sigma(	\widetilde Z+\widetilde Z'	) + f^{\sigma^0}(	\widetilde Z^0+\widetilde Z^{0\prime}	) -\frac{	\theta\gamma f^{\sigma^0}\mathbb E\left[ 2f^{\sigma^0 h}+f^{\sigma^0\sigma}(	\widetilde Z+\widetilde Z'	)+f^{\sigma^0\sigma^0}(	\widetilde Z^0+\widetilde Z^{0\prime}		)	|\mathcal F^0    \right]				}{1+\mathbb E[\theta\gamma f^{\sigma^0\sigma^0}|\mathcal F^0  ]} 							\right\} \right.\\
			&~\qquad\quad \left. \left. \times 	\left\{ f^\sigma\Delta Z+f^{\sigma^0}\Delta Z^0  - \frac{	\theta\gamma f^{\sigma^0}\mathbb E\left[\left.	 f^{\sigma^0\sigma} \Delta Z+f^{\sigma^0\sigma^0}\Delta Z^0 \right|\mathcal F^0 \right]		}{ 1+\mathbb E\left[\left.	\theta\gamma f^{\sigma^0\sigma^0} \right|\mathcal F^0\right]			}	    \right\}	\right|\mathcal F^0	    \right] \\
						&~-\frac{1}{2}\left\{   \widetilde Z^0+\widetilde Z^{0\prime} - \frac{	\theta\gamma\mathbb E\left[ \left. 2f^{\sigma^0h}+f^{\sigma^0\sigma}(	\widetilde Z+\widetilde Z'	)+f^{\sigma^0\sigma^0}(  \widetilde Z^0+\widetilde Z^{0\prime}   ) 			     \right|\mathcal F^0  \right]						}{1+\mathbb E\left[\left. \theta\gamma f^{\sigma^0\sigma^0} \right|\mathcal F^0 \right]}					\right\}   \frac{			\theta\gamma\mathbb E\left[\left.	f^{\sigma^0\sigma}\Delta Z+f^{\sigma^0\sigma^0}\Delta Z^0\right|\mathcal F^0			\right]	}{1+\mathbb E\left[\left.  \theta\gamma f^{\sigma^0\sigma^0}  \right|\mathcal F^0 \right]}      \\
			&~-\frac{\gamma(1-\gamma)\Sigma^2}{2}\left\{ 	2f^h+f^\sigma(	\widetilde Z+\widetilde Z'	) +f^{\sigma^0}( \widetilde Z^0+\widetilde Z^{0\prime} ) - \frac{	\theta\gamma f^{\sigma^0}\mathbb E\left[\left.	2f^{\sigma^0h}+f^{\sigma^0\sigma}(\widetilde Z+\widetilde Z' )+f^{\sigma^0\sigma^0}(\widetilde Z^0+\widetilde Z^{0\prime})	\right|\mathcal F^0	\right]				}{1+\mathbb E\left[ \left. \theta\gamma f^{\sigma^0\sigma^0} \right|\mathcal F^0  \right]}		      \right\}\\
			&~\qquad\qquad\quad\quad  \times  \frac{ \theta\gamma f^{\sigma^0}\mathbb E\left[\left.  f^{\sigma^0\sigma}\Delta Z+f^{\sigma^0\sigma^0}\Delta Z^0  \right|\mathcal F^0  \right]  }{1+\mathbb E\left[\left. \theta\gamma f^{\sigma^0\sigma^0}  \right|\mathcal F^0  \right]} \\
			&~+\frac{\widetilde Z+\widetilde Z'}{2}\Delta Z   \\
			&~+\frac{\gamma(1-\gamma)\Sigma^2}{2}\left\{ 	2f^h+f^\sigma(	\widetilde Z+\widetilde Z'	) +f^{\sigma^0}( \widetilde Z^0+\widetilde Z^{0\prime} ) - \frac{	\theta\gamma f^{\sigma^0}\mathbb E\left[\left.	2f^{\sigma^0h}+f^{\sigma^0\sigma}(\widetilde Z+\widetilde Z' )+f^{\sigma^0\sigma^0}(\widetilde Z^0+\widetilde Z^{0\prime})	\right|\mathcal F^0	\right]				}{1+\mathbb E\left[ \left. \theta\gamma f^{\sigma^0\sigma^0} \right|\mathcal F^0  \right]}		      \right\}    f^\sigma\Delta Z  \\ 
						&~+\frac{\gamma(1-\gamma)\Sigma^2}{2}\left\{ 	2f^h+f^\sigma(	\widetilde Z+\widetilde Z'	) +f^{\sigma^0}( \widetilde Z^0+\widetilde Z^{0\prime} ) - \frac{	\theta\gamma f^{\sigma^0}\mathbb E\left[\left.	2f^{\sigma^0h}+f^{\sigma^0\sigma}(\widetilde Z+\widetilde Z' )+f^{\sigma^0\sigma^0}(\widetilde Z^0+\widetilde Z^{0\prime})	\right|\mathcal F^0	\right]				}{1+\mathbb E\left[ \left. \theta\gamma f^{\sigma^0\sigma^0} \right|\mathcal F^0  \right]}		      \right\}   f^{\sigma^0}\Delta Z^0 \\ 
			&~		+\frac{1}{2}\left\{   \widetilde Z^0+\widetilde Z^{0\prime} - \frac{	\theta\gamma\mathbb E\left[ \left. 2f^{\sigma^0h}+f^{\sigma^0\sigma}(	\widetilde Z+\widetilde Z'	)+f^{\sigma^0\sigma^0}(  \widetilde Z^0+\widetilde Z^{0\prime}   ) 			     \right|\mathcal F^0  \right]						}{1+\mathbb E\left[\left. \theta\gamma f^{\sigma^0\sigma^0} \right|\mathcal F^0 \right]}					\right\}   \Delta Z^0					\\
			:=&~\theta\Delta\widetilde{\mathcal J}  + \mathcal C\Delta Z+\mathcal C^0\Delta Z^0
	\end{align*}
}
with $f^a$ and $f^{ab}$ defined in \cite[Appendix B]{Fu2023}.

All terms in the definition of $\Delta\mathcal J$ that do not contain $\theta$ are linear terms of $(\Delta Z,\Delta Z^0)$ and hence can be dropped by a change of measure:
\[
		\frac{d\mathbb Q}{d\mathbb P}=\mathcal E\left\{ \int_0^\cdot \mathcal C_{s}		  \,dW_s+\int_0^\cdot	\mathcal C^0_s	 \,dW^0_s  \right\}.
\]
Since $\widetilde Z$, $\widetilde Z^0$, $\widetilde Z'$ and $\widetilde Z^{0\prime}$ are in $H^2_{BMO}$, $(W^{\mathbb Q}, W^{0,\mathbb Q})$ is a $\mathbb Q$-Brownian motion, where
\begin{equation*}
	\left\{\begin{split}
		W^{\mathbb Q}_\cdot =&~ W_\cdot-\int_0^\cdot \mathcal C_s\,ds,  \\
		W^{0,\mathbb Q}_\cdot =&~ W^0_\cdot - \int_0^\cdot \mathcal C^0_s\,ds.
	\end{split}\right. 
\end{equation*} 
As a result, 
\begin{equation*}
	\begin{split}
		\Delta Y_t = \theta(1-\gamma)\int_t^T\mathbb E[ \Delta c^*_s |\mathcal F^0_s ]\,ds + \int_t^T\frac{1-\gamma}{\psi-1}  \Delta c^*_s  \,ds+\theta\int_t^T \Delta\mathcal{\widetilde J}_s \,ds-\int_t^T\Delta Z_s\,dW_s^{\mathbb Q}-\int_t^T\Delta Z^0_s\,dW^{0,\mathbb Q}_s.
	\end{split}
\end{equation*}
It\^o's formula and standard estimate imply that  
\begin{equation*}
	\begin{split}
		&~\esssup_{\omega,t\leq s\leq T}(\Delta Y_s)^2 +  \| \Delta Z\|^2_{BMO,\mathbb Q} +  	\| \Delta Z^0\|_{BMO,\mathbb Q}^2\\
		\leq&~ C\int_t^T \esssup_{\omega,t\leq r\leq s}(\Delta Y_r)^2   \,ds + \theta  C\| \Delta Z\|^2_{BMO,\mathbb Q} + \theta C	\| \Delta Z^0\|_{BMO,\mathbb Q}^2,
	\end{split}
\end{equation*}
where $C$ depends on $R$.
It implies $\Delta Y=\Delta Z=\Delta Z^0=0$ by Gr\"onwall's inequality and letting $\theta$ be small enough.
\end{proof}

%
%
%



\section{Wellposedness under deterministic parameters}\label{sec:existence}

 This section proves the existence of an equilibrium in closed form for models with deterministic parameters\footnote{They may depend on an initial distribution capturing initial heterogeneity.}. Moreover, we verify that this closed form equilibrium investment rate and consumption rate is the unique one in $L^\infty\times L^\infty_{\text{log},+}$, without additional integrability assumptions. 
 Both the MFG and the $N$-player game are considered.

\subsection{The MFG}

{\bf Uniqueness in $L^\infty\times L^\infty_{\mathrm{log},+}$.}  	Let $(\widetilde Y,\widetilde Z,\widetilde Z^0)$ and $(\widetilde Y',\widetilde Z',\widetilde Z^{0\prime})$ be solutions to the BSDE \eqref{MF-BSDE:tilde-Y} in $L^\infty\times L^\infty\times L^\infty$ and let
\[
\Delta Y:=\widetilde Y-\widetilde Y', \quad \Delta Z:=\widetilde Z-\widetilde Z', \quad \Delta Z^0:=\widetilde Z^0-\widetilde Z^{0\prime}. 
\]	
The triple $(\Delta Y,\Delta Z,\Delta Z^{0})$ satisfies the BSDE
\begin{equation*}
	\begin{split}
		\Delta Y_t=&~\int_t^T \left(A_{1,s}\mathbb E\left[\left. \frac{\psi}{\vartheta} \Delta Y_s   \right|\mathcal F^0_s   \right]+ A_{2,s}\Delta Y_s\right)\,ds +\int_t^T\left( A_{3,s}\mathbb E[A_{4,s} \Delta Z_s|\mathcal F^0_s] + A_{5,s}\mathbb E[A_{6,s} \Delta Z^0_s|\mathcal F^0_s] \right)\,ds\\
		&~+\int_t^T \left( A_{7,s}\Delta Z_s+A_{8,s}\Delta Z^0_s  \right) \,ds  -\int_t^T\Delta Z_s\,dW_s-\int_t^T\Delta Z^0_s\,dW^0_s,
	\end{split}
\end{equation*}
where all coefficients $A_i$ belong to $L^\infty$ since $(\widetilde Y,\widetilde Z,\widetilde Z^0)$ and $(\widetilde Y',\widetilde Z',\widetilde Z^{0\prime})$ are assumed to be in $L^\infty\times L^\infty\times L^\infty$. Standard estimates show that 
\begin{equation*}
	\begin{split}
		&~\mathbb E\left[  (\Delta Y_t)^2 \right]+\mathbb E\left[ \int_t^T (	\Delta Z_s)^2+(\Delta Z^0_s)^2\,ds     \right]\\
		=&~2\mathbb E\left[\int_t^T \left( A_{1,s}\Delta Y_s\mathbb E\left[\left. \frac{\psi}{\vartheta} \Delta Y_s   \right|\mathcal F^0_s   \right]+A_{2,s}(\Delta Y_s)^2 + A_{3,s}\Delta Y_s\mathbb E\left[ A_{4,s}\Delta Z_s|\mathcal F^0_s   \right]+A_{5,s}\Delta Y_s\mathbb E[A_{6,s}\Delta Z^0_s|\mathcal F^0_s]	\right)		\,ds  \right]\\
		&~+2\mathbb E\left[  \int_t^T   \Delta Y_s  \left( A_{7,s}\Delta Z_s+A_{8,s}\Delta Z^0_s  \right) \,ds   \right]\\
		\leq&~	C\mathbb E\left[  \int_t^T (	\Delta Y_s	)^2\,ds  \right]+		\frac{1}{2}\mathbb E\left[ \int_t^T (	\Delta Z_s	)^2+(  \Delta Z^0_s  )^2 \,ds  \right],
	\end{split}
\end{equation*}
which implies by Gr\"onwall's inequality $\Delta Y=\Delta Z=\Delta Z^0=0$.

Note that $(\pi,c)\in L^\infty\times L^\infty_{\text{log},+}$ implies that $(\pi,c)\in\mathcal A$. Moreover, recalling the second point in Remark \ref{rem:reverse}, we know that there exists at most one equilibrium investment rate and consumption rate in $L^\infty\times L^\infty_{\text{log},+}$.

{\bf Closed form solution.}
If the model parameters are deterministic, then $\widetilde Z=\widetilde Z^0=0$, which implies that 
\[
	Z=0,\qquad Z^0=-\frac{\theta(1-\gamma)\mathbb E\left[  \frac{\sigma^0 h}{\gamma(\sigma^2+(\sigma^0)^2)}   \right]}{	1+\mathbb E\left[  \frac{\theta(1-\gamma)(\sigma^0)^2}{ \gamma(  \sigma^2+(\sigma^0)^2  )  }    \right]		},
\]	
and
\[	
	\pi^*=\frac{h}{\gamma(\sigma^2+(\sigma^0)^2)}  -\frac{\sigma^0}{\gamma( \sigma^2+(\sigma^0)^2  )}\times \frac{\theta(1-\gamma)\mathbb E\left[  \frac{\sigma^0 h}{\gamma(\sigma^2+(\sigma^0)^2)}   \right]}{	1+\mathbb E\left[  \frac{\theta(1-\gamma)(\sigma^0)^2}{ \gamma(  \sigma^2+(\sigma^0)^2  )  }    \right]		} \in L^\infty. 
\]
Taking the above equalities into the driver of $\widetilde Y$, we get that
\begin{equation}\label{eq:Y-widetilde}
	\begin{split}
		\widetilde Y_t=&~-\theta(1-\gamma)\int_t^T \mathbb E\left[    	\pi^*_sh_s-c^*_s-\frac{1}{2}(\pi^*_s)^2(\sigma^2_s+(\sigma^0_s)^2)			    \right]  \,ds \\
				&~+ \int_t^T  ~\left\{ \frac{ (Z^0_s)^2}{2} +\frac{1-\gamma}{2\gamma} \frac{	(h_s +  \sigma^0_sZ^0_s)^2		}{\sigma^2_s+(\sigma^0_s)^2} +\frac{1-\gamma}{\psi-1} c^*_s -\delta\vartheta\right\}\,ds\\
				=&~\int_t^T	\left\{   \theta(1-\gamma)\mathbb E[c^*_s] +\frac{1-\gamma}{\psi-1}c^*_s	  \right\}  		\,ds\\
				&~-\theta(1-\gamma)\int_t^T\mathbb E\left[ 	\pi^*_sh_s-\frac{1}{2}(\pi^*_s)^2(\sigma^2_s+(\sigma^0_s)^2)	    \right]\,ds\\
				&~+\int_t^T   \left\{ \frac{ (Z^0_s)^2}{2} +\frac{1-\gamma}{2\gamma} \frac{	(h_s +  \sigma^0Z^0)^2		}{\sigma^2+(\sigma^0)^2}   -\delta\vartheta\right\}\,ds.
	\end{split}
\end{equation}
We now show that the above equation can be reduced to a Riccati equation. To this end,  
we set 
\begin{equation}\label{eq:A}
	\begin{split}
		A:=\frac{(Z^0)^2}{2}+\frac{1-\gamma}{2\gamma} \frac{		(h+\sigma^0Z^0)^2			}{\sigma^2+(\sigma^0)^2}-\delta\vartheta - \theta(1-\gamma)\mathbb E\left[  \pi^*h-\frac{1}{2}(\pi^*)^2(\sigma^2+(\sigma^0)^2)		   \right].
	\end{split}
\end{equation}
Moreover, from \eqref{c-Y-widetilde} we get that 
\begin{equation}\label{c-Y-hat}
	\begin{split}
		c^*
		=&~\exp\left(   -\frac{\theta(\psi-1)}{1+\mathbb E[ \theta(\psi-1) ]}\left\{        \mathbb E[\psi\log\delta]  - \mathbb E\left[ 	\frac{\psi}{\vartheta}\log\alpha	   \right]	  -\mathbb E\left[  \left.		\frac{\psi}{\vartheta}\widetilde Y \right|\mathcal F^0      \right]          \right\}		+\psi\log\delta-\frac{\psi}{\vartheta}\log\alpha-\frac{\psi}{\vartheta}\widetilde Y		          \right)\\
		=&~\exp\left(   \frac{\theta(\psi-1)}{1+\mathbb E[ \theta(\psi-1) ]}     \mathbb E\left[  \left.		\breve Y \right|\mathcal F^0      \right]           	- \breve Y		          \right),
	\end{split}
\end{equation}
where
\[
    \breve Y:= -\psi\log\delta+\frac{\psi}{\vartheta}\log\alpha+\frac{\psi}{\vartheta}\widetilde Y.	     
\]
Expressing the optimal consumption plan $c^*$ in terms of $\breve Y$ as shown in \eqref{c-Y-hat} and recalling the equation \eqref{eq:Y-widetilde} we see that  
\begin{equation*}
	\begin{split}
		\breve Y' &~ = -\frac{\psi}{\vartheta}   \theta(1-\gamma) \mathbb E\left[ \exp\left(   \frac{\theta(\psi-1)}{1+\mathbb E[ \theta(\psi-1) ]}\mathbb E[\breve Y]- \breve Y         \right)\right]  -\frac{\psi}{\vartheta}\frac{1-\gamma}{\psi-1} \exp\left(   \frac{\theta(\psi-1)}{1+\mathbb E[ \theta(\psi-1) ]}\mathbb E[\breve Y]- \breve Y         \right)      -\frac{\psi}{\vartheta}A\\
		& = ~ -   \theta(\psi-1) \mathbb E\left[ \exp\left(   \frac{\theta(\psi-1)}{1+\mathbb E[ \theta(\psi-1) ]}\mathbb E[\breve Y]- \breve Y         \right)\right]  -  \exp\left(   \frac{\theta(\psi-1)}{1+\mathbb E[ \theta(\psi-1) ]}\mathbb E[\breve Y]- \breve Y         \right)      -\frac{\psi}{\vartheta}A.\\
	\end{split}
\end{equation*} 
Taking expectations we get that
 \begin{equation*}
  		\mathbb E[\breve Y]'= -(\mathbb E\left[   \theta(\psi-1)\right]+1 )\mathbb E\left[ \exp\left(   \frac{\theta(\psi-1)}{1+\mathbb E[ \theta(\psi-1) ]}\mathbb E[\breve Y]- \breve Y         \right)\right]       -\mathbb E\left[\frac{\psi}{\vartheta}A \right]
\end{equation*}
from which we conclude that
\begin{align*}
\frac{		\theta(\psi-1)	}{ \mathbb E\left[   \theta(\psi-1)\right]+1 } \mathbb E[\breve Y]' - \breve Y'= \exp\left(   \frac{\theta(\psi-1)}{1+\mathbb E[ \theta(\psi-1) ]}\mathbb E[\breve Y]- \breve Y         \right) -  \frac{ \theta(\psi-1)}{ \mathbb E\left[   \theta(\psi-1)\right]+1  }\mathbb E\left[\frac{\psi}{\vartheta}A \right]+\frac{\psi}{\vartheta}A.
 \end{align*}
 
 Let us now put 
 \[
 	\mathring Y:=\frac{		\theta(\psi-1)	}{ \mathbb E\left[   \theta(\psi-1)\right]+1 } \mathbb E[\breve Y] - \breve Y 
	\quad \mbox{and} \quad \bar Y=\exp(\mathring Y).
 \]
Then, 
\begin{equation*}
	\mathring Y'=e^{\mathring Y} -  \frac{ \theta(\psi-1)}{ \mathbb E\left[   \theta(\psi-1)\right]+1  }\mathbb E\left[\frac{\psi}{\vartheta}A \right]+\frac{\psi}{\vartheta}A
\end{equation*}
and so the function $\bar Y$ satisfies the Riccati equation 
 \[
 	\bar Y'=\bar Y^2+\left(  -  \frac{ \theta(\psi-1)}{ \mathbb E\left[   \theta(\psi-1)\right]+1  }\mathbb E\left[\frac{\psi}{\vartheta}A \right]+\frac{\psi}{\vartheta}A  \right)\bar Y
 \]
 with the terminal condition
 \begin{equation}\label{eq:D}
 		\bar Y_T=\exp\left( 	\frac{\theta(\psi-1)}{\mathbb E[\theta(\psi-1)]+1}\mathbb E\left[   -\psi\log\delta+\frac{\psi}{\vartheta}\log\alpha			  \right]	-\left(	-\psi\log\delta+\frac{\psi}{\vartheta}\log\alpha		   \right)					      \right):=D.
 \end{equation}
 
 The unique solution of the above Riccati equation is given by
 \[
 	\bar Y_t=D\left\{     \exp\left( \int_t^T B_r\,dr  \right) + D\int_t^T\exp\left(  \int_t^s B_r\,dr  \right)\,ds         \right\}
 \]
 where
 \begin{equation}\label{eq:B}
 		B=\left(  -  \frac{ \theta(\psi-1)}{ \mathbb E\left[   \theta(\psi-1)\right]+1  }\mathbb E\left[\frac{\psi}{\vartheta}A \right]+\frac{\psi}{\vartheta}A  \right).
 \end{equation}
 Thus, we have shown the following result, which includes \cite[Theorem 2.6]{Riedel-2024} as a special case.
 \begin{theorem}
 	Assume $h$, $\sigma$ and $\sigma^0$ are deterministic,
 	the {\it unique} equilibrium investment and consumption plan have the following closed form expression
 	\[
 			\pi^*_t=\frac{h_t}{\gamma(\sigma^2_t+(\sigma^0_t)^2)}  -\frac{\sigma^0_t}{\gamma( \sigma^2_t+(\sigma^0_t)^2  )}\times \frac{\theta(1-\gamma)\mathbb E\left[  \frac{\sigma^0_t h_t}{\gamma(\sigma^2_t+(\sigma^0_t)^2)}   \right]}{	1+\mathbb E\left[  \frac{\theta(1-\gamma)(\sigma^0_t)^2}{ \gamma(  \sigma^2_t+(\sigma^0_t)^2  )  }    \right]		},\quad t\in[0,T]
 	\]
 	and
 	\begin{equation*}
 		\left\{\begin{split}
 		c^*_t=&~D\left\{     \exp\left( \int_t^T B_r\,dr  \right) + D\int_t^T\exp\left(  \int_t^s B_r\,dr  \right)\,ds         \right\},\quad t\in[0,T),\\
 		c^*_T=&~1,
 		\end{split}\right. 
 \end{equation*}
 	where $A$, $D$ and $B$ are given by \eqref{eq:A}, \eqref{eq:D} and \eqref{eq:B}. 
 	
 	{As a special case, the simple equilibrium strategy obtained in \cite{Riedel-2024} is unique in $L^\infty\times L^\infty_{\mathrm{log},+}$.}
 \end{theorem}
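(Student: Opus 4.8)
The statement collects the computations carried out immediately above it, so the proof divides into an existence part, built on the explicit Riccati reduction, and a uniqueness part, built on the linearization argument of this subsection.

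\textbf{Existence.} The plan is to exhibit a bounded deterministic solution of the mean-field BSDE \eqref{MF-BSDE:tilde-Y} and then invoke Theorem \ref{thm:BSDE-to-NE}. Since every model parameter is deterministic I would look for a solution with $\widetilde Z \equiv \widetilde Z^0 \equiv 0$. Under this ansatz the changes of variables \eqref{change:Z}--\eqref{change:Z-0} give $Z = 0$ together with the displayed deterministic value of $Z^0$, formula \eqref{equilibrium-investment} collapses to the asserted expression for $\pi^*$ (which is manifestly in $L^\infty$), and the $\widetilde Y$-equation reduces to the scalar first-order ODE \eqref{eq:Y-widetilde}. Passing successively to $\widehat Y := -\psi\log\delta + \tfrac{\psi}{\widetilde\theta}\log\alpha + \tfrac{\psi}{\widetilde\theta}\widetilde Y$, then to $\mathring Y := \tfrac{\theta(\psi-1)}{\mathbb E[\theta(\psi-1)]+1}\mathbb E[\widehat Y] - \widehat Y$, and finally to $\breve Y := e^{\mathring Y}$ turns this ODE into the Riccati equation $\breve Y' = \breve Y^2 + B\,\breve Y$ with terminal value $\breve Y_T = D$, the coefficients $A$, $B$, $D$ being those of \eqref{eq:A}, \eqref{eq:B}, \eqref{eq:D}. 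The candidate
\[
	\breve Y_t = D\Big\{\exp\Big(\int_t^T B_r\,dr\Big) + D\int_t^T\exp\Big(\int_t^s B_r\,dr\Big)\,ds\Big\}
\]
solves this equation and, crucially, is strictly positive and finite on all of $[0,T]$ because $D>0$ and $B\in L^\infty$; hence the Riccati solution is global and no explosion occurs on the horizon. Unwinding the substitutions (first solve the affine relation $\mathring Y \leftrightarrow \widehat Y$ by taking expectations) produces a continuous, hence bounded, deterministic $\widetilde Y$ and the claimed closed form for $c^*$ via \eqref{c-Y-hat}, with $c^*_T = 1$ forced by the admissibility condition $C^*_T = X^*_T$. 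Finally one checks that $(\widetilde Y, 0, 0) \in L^\infty \times H^2_{BMO} \times H^2_{BMO}$ — trivial, since $0 \in H^2_{BMO}$ and $\widetilde Y$ is bounded — so Theorem \ref{thm:BSDE-to-NE} yields an equilibrium with exactly the stated investment and consumption rates.

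\textbf{Uniqueness in $L^\infty\times L^\infty_+$.} Any equilibrium pair $(\pi^*,c^*) \in L^\infty\times L^\infty_+$ automatically satisfies the first two conditions in \eqref{reverse}, and by the second bullet of Remark \ref{rem:reverse} the deterministic-parameter regime lets us drop the third (reverse-H\"older) condition, so by Theorem \ref{thm:one-to-one} it is in one-to-one correspondence with a solution of \eqref{MF-BSDE:tilde-Y} having vanishing $\widetilde Z$-components, i.e. with an element of $L^\infty\times L^\infty\times L^\infty$. The linearization argument given at the start of this subsection then applies verbatim to the difference of two such solutions: because the parameters are deterministic and the $\widetilde Z$-parts are bounded, the linearized BSDE has bounded coefficients $A_1,\dots,A_8$, and a standard It\^o/Gr\"onwall $L^2$-estimate forces $\Delta Y = \Delta Z = \Delta Z^0 = 0$. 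Hence the BSDE solution, and therefore the equilibrium, is unique, and it must coincide with the closed-form strategy produced above. In particular the simple equilibrium of \cite{Riedel-2024}, being an element of $L^\infty\times L^\infty_+$, equals this closed form, which proves the last assertion.

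\textbf{Main obstacle.} The bulk of the algebra is already recorded above, so the two points that require genuine care are: (i) the global solvability of the Riccati equation on $[0,T]$, which is settled by the explicit positivity and finiteness of $\breve Y$; and (ii) the bookkeeping of solution spaces in the uniqueness step, where Remark \ref{rem:reverse} is needed both to remove the reverse-H\"older requirement in \eqref{reverse} and to guarantee that the relevant BSDE solutions are bounded, so that the Gr\"onwall estimate with bounded coefficients is legitimate.
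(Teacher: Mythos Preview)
Your proposal is correct and follows essentially the same approach as the paper: the existence part is the Riccati reduction via the chain $\widetilde Y \to \widehat Y \to \mathring Y \to \breve Y$ exactly as carried out just above the theorem, and the uniqueness part is the linearization-plus-Gr\"onwall estimate opening Section~\ref{sec:existence}, combined with Remark~\ref{rem:reverse} to handle the solution spaces. You are in fact slightly more explicit than the paper on two points the paper leaves implicit---the global (no-blow-up) solvability of the Riccati equation on $[0,T]$ via the positivity of the explicit formula, and the unwinding of the transformations back to $\widetilde Y$---which is all to the good.
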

 
 
 \subsection{The $N$-player game}
Having solved the MFG, solving $N$-player games requires only minor modifications of previously given arguments. In the $N$-player game we set 
 	$$
 	\nu = \left( \prod_{j \neq i} C^j \right)^{\frac{1}{N-1}}.
 	$$
A lengthy yet relatively straightforward computation yields the equilibrium investment strategy
\begin{equation*}
	\begin{split}
		\pi^{i,*}_t=\frac{h^i_t}{	\gamma^i(\sigma^i)^2+\left( 	\gamma^i-\frac{\theta^i(1-\gamma^i)}{N-1}		  \right)(\sigma^{i0}_t)^2			} - \frac{\theta^i (1-\gamma^i)\sigma^{i0}_t}{ \gamma^i(\sigma^{i}_t)^2 + \left( \gamma^i-\frac{\theta^i(1-\gamma^i)}{N-1}  \right)(\sigma^{i0}_t)^2  } \frac{\phi^N_t}{1+\psi^N_t},\quad t\in[0,T]
	\end{split}
\end{equation*}
where 
\[
	\phi^N:=\frac{1}{N-1}\sum_{j=1}^N \frac{	 h^j\sigma^{j0}		}{\gamma^j(\sigma^j)^2+\left(   \gamma^j -\frac{\theta^j(1-\gamma^j)}{N-1} \right)(	 \sigma^{j0}	)^2	}
\]
and
\[
	\psi^N:=\frac{1}{N-1}\sum_{j=1}^N \frac{\theta^j(1-\gamma^j)(\sigma^{j0})^2}{   \gamma^j(\sigma^{j})^2+\left(  \gamma^j-\frac{\theta^j(1-\gamma^j)}{N-1}  \right)(\sigma^{j0})^2      },
\]
and the equilibrium consumption plan
\[
	c^{i,*}_t=D^i\left\{	\exp\left( 	\int_t^T B^i_r\,dr  \right)	  + D^i\int_t^T\exp\left(  \int_t^s B^i_r\,dr	 \right)\,ds	\right\}, ~t\in[0,T), \text{ and } c^{i,*}_T=1,
\]
where the coefficients $B^i$ and $D^i$ are given by
\[
		B^i=b^iA^i-\frac{a^i}{1+\frac{1}{N-1}\sum_{i=1}^N a^i} \frac{1}{N-1}\sum_{i=1}^N b^iA^i,
\]
\[
	b^i=\frac{\frac{\psi^i-1}{1-\gamma^i}}{		1-\frac{\theta^i(\psi^i-1)}{N-1}			},\quad a^i=\frac{\theta^i(\psi^i-1)}{	1-\frac{\theta^i(\psi^i-1)}{N-1}		},
\]
\begin{equation*}
	\begin{split}
		A^i=&~-\theta^i(1-\gamma^i)\frac{1}{N-1}\sum_{j\neq i}\left\{   \pi^{j,*}h^j-\frac{1}{2}\left( (\sigma^j)^2+(\sigma^{j0})^2		\right)(\pi^{j,*})^2   \right\}+\frac{ (Z^{i0})^2}{2}+\frac{1}{2}\sum_{j\neq i}(Z^{ij})^2\\
		&~+\frac{1-\gamma^i}{2\gamma^i} \frac{  ( h^i  +  \sigma^{i0}Z^{i0}  )^2  }{ (\sigma^i)^2+(\sigma^{i0})^2  } -\delta^i\vartheta^i,
	\end{split}
\end{equation*}
\begin{equation*}
	\begin{split}
	Z^{i0} = & -\frac{\frac{\theta^{i} (1-\gamma^{i})}{N-1}}{1 - \frac{\theta^{i} (1-\gamma^{i}) (\sigma^{i0})^2}{(N-1) \gamma^{i}   \left((\sigma^{j})^2 + (\sigma^{j0})^2 \right)}} 
	\sum_{j \neq i} \frac{\sigma^{j0} h^j}{ \gamma^j \left\{ (\sigma^j)^2 + (\sigma^{j0})^2 \right\}} \\
	& + \frac{\frac{\theta^{i} (1-\gamma^{i})}{N-1}}{1 - \frac{\theta^{i} (1-\gamma^{i}) (\sigma^{i0})^2}{(N-1)\gamma^{i} \left((\sigma^{j})^2 + (\sigma^{j0})^2 \right)}} 
	\cdot \frac{1}{1 + \psi^{N}} \sum_{i=1}^{N} 
	\frac{\frac{\theta^{i} (1-\gamma^{i}) (\sigma^{i0})^2}{(N-1) \gamma^{i}  \left((\sigma^{j})^2 + (\sigma^{j0})^2 \right)}}{1 - \frac{\theta^{i} (1-\gamma^{i}) (\sigma^{i0})^2}{(N-1)   \gamma^{i}   \left((\sigma^{j})^2 + (\sigma^{j0})^2 \right)}} 
	\sum_{j \neq i} \frac{\sigma^{j0} h^j}{  \gamma^j   \left\{ (\sigma^j)^2 + (\sigma^{j0})^2 \right\}},
	\end{split}
\end{equation*}
and
\[
		Z^{ij}=-\frac{		\theta^i\gamma^i				}{	N-1	}\frac{\sigma^{i0}Z^{i0}+h^i}{ 	\gamma^i((\sigma^i)^2+(\sigma^{i0})^2	)	 }. 
\]

\section{The case with power utility revisited}\label{sec:power}

In this section, we apply our maximum-principle-based approach to the settings considered in \cite{Fu-2021, Fu2023}, where mean-field portfolio games with and without consumption under power utility, 
$$
U(x)=\frac{x^{1-\gamma}}{1-\gamma}, \quad \gamma>0, \gamma\neq 1,
$$
were studied. In those works, a one-to-one correspondence was established between equilibrium strategies and the solutions of certain BSDEs.
The strength of our approach lies in the fact that, when the Epstein-Zin utility reduces to power utility, (i) we recover the one-to-one correspondence under exactly the same assumptions as in \cite{Fu-2021}, without relying on the DPP in \cite{ET-2015}; and (ii) we derive the one-to-one correspondence in a larger space than that considered in \cite{Fu2023}. In particular, in both cases, the space $L^{\beta,T}_+$, which is specifically designed for Epstein--Zin utility, is not required.
 
To establish the one-to-one correspondence, the crucial step is the necessary characterization of the best response in MFGs. In this section, we restrict our attention to proving this necessary condition, since the sufficiency follows from arguments analogous to those in \cite{Fu-2021, Fu2023}. 
 
 \subsection{Power utility with only investment}\label{sec:power-investment}
 In this section, we revisit the problem in \cite{Fu-2021}: for a fixed $\nu$, 
 \[
 		J(\pi)=\mathbb E\left[U(X^\pi_T\nu_T^{-\theta})\right]\rightarrow \max_\pi,
 \]
 where the wealth process satisfies 
 \[
 		dX^\pi_t = \pi_t X^\pi_t\left( h_t\,dt + \sigma_t\,dW_t + \sigma^0_t\,dW^0_t   \right),\quad X^\pi_0=x.
 \]
 The log-wealth process satisfies 
 \[
 		d\widehat X^\pi_t = \left( \pi_t h_t - \frac{1}{2}\pi^2_t\Sigma^2_t \right)\,dt + \pi_t\sigma_t\,dW_t+\pi_t\sigma^0_t\,dW^0_t,\quad \widehat X^\pi_0=\log x.
 \]
 Let $\pi^*$ be the best response and $X^*$ be the optimal wealth process.   
 Define the conditional expected utility as
 \[
     D^*_t = \mathbb E\left[U(X^{*}_T \nu^{-\theta}_T  ) | \mathcal F_t		\right] = \mathbb E\left[ \left. \frac{1}{1-\gamma}\exp\left\{   (1-\gamma) ( \widehat X^*_T -\theta \widehat \nu_T)  \right\}  \right| \mathcal F_t    \right].
 \]
 Throughout Section \ref{sec:power-investment}, we make the following assumption. 
 \begin{ass}\label{ass:power-investment}
      	$D^* \text{ satisfies reverse H\"older inequality }R_b\text{ for some }b>1$; refer to e.g. \cite[Appendix B]{Fu-2021} for the definition of reverse H\"older inequality $R_b$.
 \end{ass}
In this section, we will show that in the power utility case, $\pi^*\in H^2_{BMO}$ alone, together with Assumption \ref{ass:power-investment}, is enough to establish the local maximum principle, and thus the one-to-one correspondence. In particular, we do not need $X^*_T$ to be integrable to all powers as in Definition \ref{def:admissibility} for the Epstein-Zin utility.

For $\epsilon>0$ and a progressively measurable bounded process $\eta$, define $\pi^\epsilon=\pi^*+\epsilon\eta$. The corresponding log-wealth $\widehat X^\epsilon$ satisfies
\begin{equation}\label{diff:log-wealth}
	\widehat X^\epsilon_t = \widehat X^*_t+\epsilon\xi^\eta_t-\frac{1}{2}\epsilon^2 \int_0^t \eta^2_s\Sigma^2_s\,ds, 
\end{equation}
where 
\begin{equation}\label{eq:xi-eta-t}
\xi^\eta_t = \int_0^t \eta_s(h_s - \pi^*_s\Sigma^2_s)\,ds + \int_0^t \eta_s\sigma_s\,dW_s + \int_0^t \eta_s\sigma^0_s\,dW^0_s.
\end{equation}
The next lemma establishes the first order variation of the expected utility.
 \begin{lemma}
 	
  Let $\pi^*\in H^2_{BMO}$. Define $P_T=e^{(1-\gamma)(\widehat X^*_T-\theta\widehat\nu_T)}$. Then it holds
   \begin{equation}\label{variation:J}
		  \lim_{\epsilon\rightarrow 0} \frac{J(\pi^\epsilon)-J(\pi^*)}{\epsilon} = \mathbb E[P_T\xi^\eta_T].
  \end{equation}
 \end{lemma}
 \begin{proof}

 	By \eqref{diff:log-wealth}, we have 
 	\[
 			\frac{J(\xi^\epsilon)-J(\pi^*)}{\epsilon} = \mathbb E\left[   \frac{e^{ (1-\gamma)(\widehat X^\epsilon-\theta\widehat\nu_T) } - e^{(1-\gamma)(\widehat X^*_T-\theta\widehat\nu_T)} }{\epsilon(1-\gamma)} \right] = \mathbb E\left[  \frac{P_T}{1-\gamma} \frac{ \exp\left\{(1-\gamma)(\epsilon\xi^\eta_T-\frac{1}{2}\epsilon^2\int_0^T\eta^2_t\Sigma^2_t\,dt)\right\} -1 }{\epsilon}    \right].
 	\]
 	Note that the $\epsilon$-dependent term inside the expectation admits a pointwise limit 
 	\[
 		\lim_{\epsilon\rightarrow 0}	\frac{1}{1-\gamma}\frac{ 	\exp\left\{(1-\gamma)(\epsilon\xi^\eta_T-\frac{1}{2}\epsilon^2\int_0^T\eta^2_t\Sigma^2_t\,dt)\right\} -1 	}{\epsilon} = \xi^\eta_T.
 	\]
To justify interchanging the limit and the expectation, we show that the random variable inside the expectation can be bounded by an integrable random variable that is independent of $\epsilon$. 	

 	Since $\pi^*\in H^2_{BMO}$, John-Nirenberg inequality (see \cite[Theorem 2.1]{Kazamaki-2006}) implies that for some positive constant $\varepsilon$ small enough 
 	\begin{equation}\label{John-Nirenberg}
		\mathbb E\left[  \exp\left\{  \varepsilon \int_0^T ( \pi^*_t	)^2\Sigma^2_t\,dt		\right\}     \right]<\infty.  		
 	\end{equation}
 For any constant $\beta$, we have by \eqref{John-Nirenberg}
 \begin{equation*}
 	\mathbb E\left[  \exp\left\{ \beta\left|  \int_0^T \eta_t\pi^*_t\Sigma^2_t	\,dt	\right|	\right\}     \right] \leq \text{const}(\beta,\varepsilon) \mathbb E\left[  \exp\left\{     \varepsilon \int_0^T	(\pi^*_t)^2\Sigma^2_t\,dt    \right\}   \right]<\infty,
 \end{equation*}
which implies that for any constant $\beta$
\begin{equation*}
	\mathbb E\left[  e^{\beta|\xi^\eta_T|}   \right]<\infty.
\end{equation*}
Using the inequality $|e^x-1|\leq |x|e^{|x|}$, we have 
\begin{equation*}
	\begin{split}
		&~\frac{ \exp\left\{(1-\gamma)(\epsilon\xi^\eta_T-\frac{1}{2}\epsilon^2\int_0^T\eta^2_t\Sigma^2_t\,dt)\right\} -1 }{\epsilon}   \\
		\leq&~ |(1-\gamma)   \xi^\eta_T |     \exp\left\{(1-\gamma)(\epsilon\xi^\eta_T-\frac{1}{2}\epsilon^2\int_0^T\eta^2_t\Sigma^2_t\,dt)\right\}.
	\end{split}
\end{equation*}
Since $\pi^*\in H^2_{BMO}$ and $\eta$ is bounded, $\xi^\eta_T$ is integrable to all powers. Moreover, since $D^*$ satisfies $R_b$, there exists some $r>1$ such that $P_T\in L^r$. Thus, for $\epsilon<1$
\[
		  \frac{P_T}{1-\gamma} \frac{ \exp\left\{(1-\gamma)(\epsilon\xi^\eta_T-\frac{1}{2}\epsilon^2\int_0^T\eta^2_t\Sigma^2_t\,dt)\right\} -1 }{\epsilon}   \leq |P_T  \xi^\eta_T |  \exp\left\{|1-\gamma|\left(|\xi^\eta_T|+\frac{1}{2} \int_0^T\eta^2_t\Sigma^2_t\,dt\right)\right\},
\]
which is integrable. Dominated convergence yields the desired result.
 	\end{proof}

 \begin{theorem}\label{thm:one-to-one-power}
		Let $\pi^*\in H^2_{BMO}$ be the best response of $\nu$. Then it must satisfy
		\[
						\pi^* = \frac{h+\sigma Z+\sigma^0Z^0}{\gamma\Sigma^2},
		\]
		where $(Z,Z^0)$ together with some $Y$ is the solution to the following BSDE 
		\begin{equation}\label{power:BSDE}
			\left\{\begin{split}
				-dY_t=&~\left\{	\frac{1}{2}(Z^2_t+(Z^0_t)^2) + \frac{ (1-\gamma)( h_t+\sigma_tZ_t+\sigma^0_t Z^0_t )^2 }{2\gamma\Sigma^2_t}		\right\}\,dt - Z_t\,dW_t - Z^0_t\,dW^0_t,\\
				Y_T=&~-(1-\gamma)\theta\widehat\nu_T.
			\end{split}\right. 
		\end{equation}
	Moreover, $(Z,Z^0)\in H^2_{BMO}\times H^2_{BMO}$.
 \end{theorem}
 \begin{proof}
 	Define $P_t=\mathbb [P_T|\mathcal F_t]$. By Assumption \ref{ass:power-investment}, there exists $(\lambda,\lambda^0)\in H^2_{BMO}\times H^2_{BMO}$ such that 
 	\begin{equation}\label{exponential-martingale:p}
 			P_t = P_0\mathcal E\left(    \int_0^t \lambda_s\,dW_s+\lambda^0_s\,dW^0_s        \right).
 	\end{equation}
 	Tower property of conditional expectation implies that 
 	\begin{equation}\label{power:int-by-parts-1}
 		\begin{split}
 			\mathbb E\left[  P_T\int_0^T\eta_t(	h_t - \pi^*_t\Sigma^2_t	)\,dt    \right] = 	\mathbb E\left[  \int_0^T\eta_t(	h_t - \pi^*_t\Sigma^2_t	)\mathbb E\left[ P_T|\mathcal F_t\right]\,dt    \right]=\mathbb E\left[  \int_0^T\eta_t(	h_t - \pi^*_t\Sigma^2_t	) P_t\,dt    \right].
 		\end{split}
 	\end{equation}
 	Integration by parts implies that by noting that $P$ is a true martingale 
 	\begin{equation}\label{power:int-by-parts-2}
 		\mathbb E\left[ P_T\left(\int_0^T \eta_t\sigma_t\,dW_t + \int_0^T \eta_t\sigma^0_t\,dW^0_t\right)   \right]=\mathbb E\left[  \int_0^T\eta_t(\sigma_tp_t\lambda_t+\sigma^0_tp_t\lambda^0_t) \,dt	   \right].
 	\end{equation}
 Summing up \eqref{power:int-by-parts-1} and \eqref{power:int-by-parts-2}, and noting the definition of $\xi^\eta_T$, we have 
 \begin{equation*}
 	\mathbb E\left[ P_T\xi^\eta_T  \right] = \mathbb E\left[ \int_0^T \eta_tP_t\left(  	h_t-\pi^*_t\Sigma^2_t+\sigma_t\lambda_t+\sigma^0_t\lambda^0_t	\right)\,dt     \right].
 \end{equation*}
By \eqref{variation:J} and the optimality of $\pi^*$, using the same argument as Proposition \ref{prop:maximum-principle}, we have 
\begin{equation}\label{power:maximum-principle}
		 h-\pi^*\Sigma^2+\sigma\lambda+\sigma^0\lambda^0=0.
\end{equation}
Define 
	\begin{equation}\label{power:YZ}
		\left\{\begin{split}
			Y=&~\log P-(1-\gamma)\widehat X^*,\\
			Z = &~\lambda - (1-\gamma)\sigma\pi^*,\\
			Z^0 = &~\lambda^0-(1-\gamma)\sigma^0\pi^*,
		\end{split}\right.
	\end{equation}
which implies that 
\begin{equation}\label{power:candidate-Y}
	dY_t=\left\{	-\frac{1}{2}(\lambda^2_t+(\lambda^0_t)^2) -(1-\gamma)\left( \pi^*_th_t-\frac{1}{2}(\pi^*_t)^2\Sigma^2_t		\right)				\right\}\,dt +Z_t\,dW_t+Z^0_t\,dW^0_t, 
\end{equation}
and that $Y_T=\log P_T-(1-\gamma)\widehat X^*_T=-(1-\gamma)\theta\widehat\nu_T$.
Taking \eqref{power:YZ} into \eqref{power:maximum-principle}, we have 
\[
		\pi^* = \frac{h+\sigma Z+\sigma^0Z^0}{\gamma\Sigma^2}.
\]
Taking the above equation into the driver of $Y$ in \eqref{power:candidate-Y}, we obtain \eqref{power:BSDE}. 
 \end{proof}
 
 \begin{remark}
We reiterate once again that the strong integrability condition for $X_T$ in Definition \ref{def:admissibility}, although standard in the literature, is specifically tailored to the Epstein–Zin utility framework. In the case of power utility, Theorem \ref{thm:one-to-one-power} yields the same necessary characterization as in \cite{Fu-2021} under exactly the same assumptions there. In particular, the integrability condition on $X_T$ imposed in Definition \ref{def:admissibility} is not required in this setting.

In the next section, we further demonstrate that, even in the presence of consumption under power utility, the space $\bigcap_{\beta\in\mathbb R}L^{\beta,T}_+$ remains unnecessary. Moreover, our approach improve the result in \cite{Fu2023} by enlarging the space in which the one-to-one correspondence holds.
	\end{remark}

	\subsection{Power utility with both investment and consumption}

	In this section, we revisit the problem in \cite{Fu2023}: for a fixed positive-valued process $\nu$,
	\[
			J(\pi,c) = \mathbb E\left[	  U(X_T^{\pi,c}\nu^{-\theta}_T) + a \int_0^T U(c_sX^{\pi,c}_s \nu_s^{-\theta})	 \,ds		\right]\rightarrow\max_{\pi,c},
	\]
	where the wealth process follows
	\[
			dX^{\pi,c}_t = \pi_t X^{\pi,c}_t\left( h_t\,dt + \sigma_t\,dW_t + \sigma^0_t\,dW^0_t   \right)-c_tX^{\pi,c}_t\,dt.
	\]
	and $a>0$ is a weight between running utility and terminal utility.
	
	Let $(\pi^*,c^*)\in H^2_{BMO}\times L^{\infty}_{\text{log},+}$ be the best response.	
	Since the sign of $1-\gamma$ is uncertain, we define the adjusted running and terminal utilities, thus positive, as
	\begin{equation}\label{adjusted-running-utility:power-consumption}
			\mathcal R^*_t= (c^*_t)^{1-\gamma} e^{(1-\gamma)\widehat X^*_t}\nu_t^{-(1-\gamma)\theta}
	\end{equation}
and 
	\begin{equation}\label{adjusted-terminal-utility:power-consumption}
	\mathcal T^*=e^{(1-\gamma)\widehat X^*_T}\nu^{-(1-\gamma)\theta}_T.
\end{equation}
	Define the optimal conditional expected utility as
	\[
			D^*_t = \mathbb E\left[	\left.	\frac{1}{1-\gamma} \mathcal T^* + a\int_0^T\frac{1}{1-\gamma} \mathcal R^*_s\,ds	\right|\mathcal F_t\right],
	\]
	 define
	\[
			N^*_t = \mathbb E\left[\left. \frac{1}{1-\gamma}\mathcal T^*+a\int_t^T\frac{1}{1-\gamma}\mathcal R^*_s\,ds \right| \mathcal F_t  \right]
	\]
	and define 
\begin{equation}\label{def:P-power-consumption}
		P=(1-\gamma)N^*.
	\end{equation}
    Martingale representation yields $(K,K^0)$ such that
    \[
        dP_t=-a\mathcal R^*_t\,dt + K_t\,dW_t+K^0_t\,dW^0_t=-a\mathcal R^*_t\,dt+(1-\gamma)\, dD^*_t.
    \]
    Define
    \begin{equation}\label{eq:lambda-power-consumption}
        (\lambda,\lambda^0)=\left( \frac{K}{P}, \frac{K^0}{P}   \right).
    \end{equation}
	Throughout this section, we make the following assumption. 
\begin{ass}\label{ass:power-investment-consumption}
    $		(\lambda,\lambda^0) \in H^2_{BMO}\times H^2_{BMO}
$.
\end{ass}
\begin{remark}
    Define
    \begin{equation}\label{eq:mathring-Z-power-consumption}
        (\mathring Z,\mathring Z^0)=\left( \frac{N^*}{D^*}\lambda, \frac{N^*}{D^*}\lambda^0     \right).
    \end{equation}
    Then we have
    \[
        dD^*_t = D^*_t\mathring Z_t\,dW_t+D^*_t\mathring Z^0_t\,dW^0_t,
    \]
    which implies that $		D^*_t = D^*_0\mathcal E\left(  \int_0^t \mathring Z_s\,dW_s + \int_0^t \mathring Z^0_s\,dW^0_s     \right)
$ and that
\begin{equation}\label{BSDE-P-1:power-consumption}
	dP_t=-a\mathcal R^*_t\,dt + (1-\gamma)D^*_t\mathring Z_t\,dW_t+(1-\gamma)D^*_t\mathring Z^0_t\,dW^0_t,\quad P_T=\mathcal T^*.
\end{equation}
Moreover, by the definition of $D^*$ and $N^*$, we have
\[
    |\mathring Z|+|\mathring Z^0|\leq |\lambda|+|\lambda^0|,
\]
which implies $(\mathring Z,\mathring Z^0)\in H^2_{BMO}\times H^2_{BMO}$. Thus,
\begin{equation}\label{ass-1:power-consumption}
		D^* \text{ satisfies }R_b \text{ for some }b>1.
	\end{equation}
    In particular, if there is no consumption, i.e., $a=0$, then $N^*=D^*$. Assumption \ref{ass:power-investment-consumption} reduces to \eqref{ass-1:power-consumption}, which coincides with Assumption \ref{ass:power-investment}.
\end{remark}


Let the best response satisfy	\begin{equation*}\label{ass-3:power-consumption}
		(\pi^*,\log c^*)\in H^2_{BMO}\times L^\infty.
	\end{equation*}
	For each $\epsilon>0$ and progressively measurable and bounded processes $(\eta,\kappa)$, define the local perturbation $(\pi^\epsilon,c^\epsilon)=(\pi^*+\epsilon\eta,c^*e^{\epsilon\kappa})$, the same as Proposition \ref{prop:maximum-principle}. The associated log-wealth process $\widehat X^\epsilon$ satisfies \eqref{eq:log-wealth-epsilon}.

	The following lemma establishes the first order variation of the expected utility.
	\begin{lemma}
 The first order variation of the log-wealth $\mathcal X= \lim_{\epsilon\rightarrow 0}  \frac{\widehat X^\epsilon-\widehat X^*}{\epsilon}$ satisfies
			\begin{equation}\label{variation-X:power-consumption}
		d\mathcal X_t = d\xi^\eta_t -c^*_t\kappa_t\,dt,\quad \mathcal X_0=0,			
	\end{equation}
	where $\xi^\eta$ satisfies \eqref{eq:xi-eta-t}.

		The first order variation of the expected utility satisfies 
		\begin{equation}\label{variation-J:power-consumption}
			0=\left.\frac{d}{d\epsilon} J(\pi^\epsilon,c^\epsilon)\right\vert_{\epsilon=0} = \mathbb E\left[ \mathcal T^*\mathcal X_T+a\int_0^T \mathcal R^*_t(\mathcal X_t+\kappa_t)\,dt   \right].
		\end{equation}
	\end{lemma}
	\begin{proof}
		From the dynamics of $\widehat X^\epsilon$ and $\widehat X^*$ in \eqref{eq:log-wealth-epsilon} and \eqref{eq:optimal-log-wealth}, it follows immediately that
			\begin{equation}\label{diff-log-weath:power-consumption}
			\widehat X^\epsilon_t  - \widehat X^*_t = \epsilon \xi^\eta_t-\int_0^t c^*_s(e^{\epsilon\kappa_s}-1)\,ds - \frac{\epsilon^2}{2}\int_0^t \eta^2_s\Sigma^2_s\,ds,
		\end{equation}
	which implies $\eqref{variation-X:power-consumption}$ by dividing $\epsilon$, taking limit and using dominated convergence.

		Using the equalities about the terminal utility
		\begin{equation}\label{diff-terminal:power-consumption}
			\begin{split}
				U(X^\epsilon_T\nu^{-\theta}_T) - U(X^*_T\nu^{-\theta}_T) =&~ \frac{1}{1-\gamma}\nu^{-\theta(1-\gamma)}_T\left( e^{\widehat X^\epsilon_T(1-\gamma)} - e^{\widehat X^*_T(1-\gamma)}  \right)\\
				=&~ e^{\widehat X^*_T(1-\gamma)}\nu^{-\theta(1-\gamma)} (	\widehat X^\epsilon_T-\widehat X^*_T	) \int_0^1 e^{(1-\gamma) \varrho (\widehat X^\epsilon_T-\widehat X^*_T)  }\,d\varrho \\
				=&~\mathcal T^*(	\widehat X^\epsilon_T-\widehat X^*_T	) \int_0^1 e^{(1-\gamma) \varrho (\widehat X^\epsilon_T-\widehat X^*_T)  }\,d\varrho
			\end{split}
		\end{equation}
	and about the running utility 
	\begin{equation}\label{diff-running:power-consumption}
		\begin{split}
			U( c^\epsilon X^\epsilon\nu^{-\theta} ) - U(c^* X^* \nu^{-\theta}) = &~
			\frac{1}{1-\gamma}\nu^{-\theta(1-\gamma)} \left(  (c^\epsilon)^{1-\gamma} e^{(1-\gamma)\widehat X^\epsilon}  - (c^*)^{1-\gamma}e^{(1-\gamma)\widehat X^*}  \right)\\
			=&~\frac{1}{1-\gamma}\nu^{-\theta(1-\gamma)}(c^*)^{1-\gamma} e^{(1-\gamma)\widehat X^*}\left(  e^{(1-\gamma)\epsilon\kappa + (1-\gamma)(\widehat X^\epsilon-\widehat X^*)}-1     \right)\\
			=&~\mathcal R^*\left(  \epsilon\kappa+\widehat X^\epsilon - \widehat X^*  \right) \int_0^1 e^{	 \varrho(1-\gamma)\epsilon\kappa + \varrho(1-\gamma)(\widehat X^\epsilon - \widehat X^*)		} \,d\varrho,
		\end{split}
	\end{equation}
we have 
		\begin{equation}\label{variation-J-eps:power-consumption}
			\begin{split}
				\frac{J(\pi^\epsilon,c^\epsilon) - J(\pi^*,c^*)}{\epsilon} =&~ \mathbb E\left[  \mathcal T^* \frac{	\widehat X^\epsilon_T-\widehat X^*_T}{\epsilon}	 \int_0^1 e^{(1-\gamma) \varrho (\widehat X^\epsilon_T-\widehat X^*_T)  }\,d\varrho  \right]\\
				&~+ \mathbb E\left[  a\int_0^T \mathcal R^*_t\left(   \kappa_t+\frac{\widehat X^\epsilon_t - \widehat X^*_t}{\epsilon}  \right) \int_0^1 e^{	 \varrho(1-\gamma)\epsilon\kappa_t + \varrho(1-\gamma)(\widehat X^\epsilon_t - \widehat X^*_t)		} \,d\varrho\,dt     \right].
			\end{split}
		\end{equation}
	To obtain \eqref{variation-J:power-consumption}, on the one hand, from \eqref{diff-terminal:power-consumption} and \eqref{diff-running:power-consumption}, we have the pointwise convergence
	\[
			\lim_{\epsilon\rightarrow\infty}\frac{U(X^\epsilon_T\nu^{-\theta}_T) - U(X^*_T\nu^{-\theta}_T) }{\epsilon} = \mathcal T^*\mathcal X_T
	\]
	and
	\[
			\lim_{\epsilon\rightarrow\infty}\frac{U( c^\epsilon X^\epsilon\nu^{-\theta} ) - U(c^* X^* \nu^{-\theta}) }{\epsilon} = \mathcal R^*(\kappa +\mathcal X).
	\]
	On the other hand, if we can bound the random variables inside the expectations in \eqref{variation-J-eps:power-consumption} by an integrable random variable, then dominated convergence yields \eqref{variation-J:power-consumption}. 
	
	It remains to establish the bound. By the boundedness of $\eta$, $\kappa$ and $\pi^*$, from \eqref{diff-log-weath:power-consumption} we have
	\[
		\left|\widehat X^\epsilon_t - \widehat X^*_t\right| +	\frac{\left|\widehat X^\epsilon_t - \widehat X^*_t\right|}{\epsilon} \leq \text{const}\left(1+\int_0^t|\pi^*_s|\,ds + \left|\int_0^t\eta_s\sigma_s\,dW_s\right|+\left|\int_0^t\eta_s\sigma^0_s\,dW^0_s\right|\right):=\mathcal B_t, 
	\]
	where the constant is independent of $t$ and $\epsilon$. Then, by noting $\mathcal T^*$ and $\mathcal R^*$ are positive, we have
	\begin{equation*}
		\begin{split}
			   &~\mathcal T^* \frac{	\widehat X^\epsilon_T-\widehat X^*_T}{\epsilon}	 \int_0^1 e^{(1-\gamma) \varrho (\widehat X^\epsilon_T-\widehat X^*_T)  }\,d\varrho +  a\int_0^T \mathcal R^*_t\left(   \kappa_t+\frac{\widehat X^\epsilon_t - \widehat X^*_t}{\epsilon}  \right) \int_0^1 e^{	 \varrho(1-\gamma)\epsilon\kappa_t + \varrho(1-\gamma)(\widehat X^\epsilon_t - \widehat X^*_t)		} \,d\varrho\,dt     \\
			   \leq&~\left( \mathcal T^* +a\int_0^T\mathcal R^*_t\,dt\right) \left(   \|\kappa\| + \sup_{0\leq t\leq T}\mathcal B_t 		    \right)  \left(1+ e^{|1-\gamma|\|\kappa\|}\right) e^{|1-\gamma|\sup_{0\leq t\leq T}\mathcal B_t}.
		\end{split}
	\end{equation*}
Since $\pi^*\in H^2_{BMO}$, $\sup_{0\leq t\leq T}\mathcal B_t$ is integrable to any power. Moreover, John-Nirenberg inequality yields a constant $\varepsilon>0$ small enough such that for any $\beta$, it holds
\begin{equation*}
	\mathbb E\left[   e^{\beta \int_0^T|\pi^*_s|\,ds}  \right]\leq \mathbb E\left[   e^{ \varepsilon \int_0^T|\pi^*_s|^2\,ds+\frac{\beta^2T}{4\varepsilon}}  \right]<\infty. 
\end{equation*}
By \eqref{ass-1:power-consumption}, we have $\mathcal T^* +a\int_0^T\mathcal R^*_t\,dt\in L^r$ for some $r>1$. Thus, the upper bound is integrable. 
	\end{proof}
The next lemma establishes the maximum principle.
\begin{lemma}
	Recall $P$ defined in \eqref{BSDE-P-1:power-consumption} and $(\lambda,\lambda^0)$ defined in \eqref{eq:lambda-power-consumption}.
	The best response satisfies
	\begin{equation}\label{optimality-1:power-consumption}
			h-\pi^*\Sigma^2+ \sigma\lambda+\sigma^0\lambda^0	=0
	\end{equation}
and
	\begin{equation}\label{optimality-2:power-consumption}
	 	a\mathcal R^* - c^* P =0.
\end{equation} 
\end{lemma}
\begin{proof}
	By \eqref{BSDE-P-1:power-consumption} and \eqref{variation-X:power-consumption}, integration by parts implies that 
	\[
		d(P_t\mathcal X_t) = \left\{  P_t\eta_t(h_t-\pi^*_t\Sigma^2_t) - P_tc^*_t\kappa_t -a\mathcal R^*_t\mathcal X_t \right\}\,dt +\left\{		(1-\gamma)D^*_t\mathring Z_t\eta_t\sigma_t+(1-\gamma)D^*_t\mathring Z^0_t\eta_t\sigma_t			\right\}\,dt +\text{local martingale}.
	\]
	By the definition of $P$ in \eqref{def:P-power-consumption} and the definition of $(\mathring Z,\mathring Z^0)$ in \eqref{eq:mathring-Z-power-consumption}, we have
	\[
		(1-\gamma)D^* \mathring Z\eta\sigma=P\lambda\sigma\eta,\quad  (1-\gamma)D^* \mathring Z^0\eta\sigma^0=P\lambda^0\sigma^0\eta,
	\]
	which implies
	\[
			d(P_t\mathcal X_t) = \left\{ P_t(h_t-\pi^*_t\Sigma^2_t)\eta_t - P_t c^*_t\kappa_t-a\mathcal R^*_t\mathcal X_t+P_t\lambda_t\sigma_t\eta_t+P_t\lambda^0_t\sigma^0_t\eta_t			\right\}\,dt + \text{local martingale}.
	\]
	Let $\tau_n\uparrow T$ be a sequence of localization stopping times. Then it holds that
	\begin{equation}\label{localization:power-consumption}
		\mathbb E\left[ P_{\tau_n}\mathcal X_{\tau_n}  \right] = \mathbb E\left[ \int_0^{\tau_n}\left\{ P_t(h_t-\pi^*_t\Sigma^2_t)\eta_t - P_t c^*_t\kappa_t-a\mathcal R^*_t\mathcal X_t+P_t\lambda_t\sigma_t\eta_t+P_t\lambda^0_t\sigma^0_t\eta_t			\right\}\,dt    \right].
	\end{equation}
	Since $\mathcal T^*>0$ and $\mathcal R^*_t>0$, we have
	\[
		0<P_t=\mathbb E\left[ \left. \mathcal T^* + a\int_t^T\mathcal R^*_s\,ds \right|\mathcal F_t  \right]\leq \mathbb E\left[ \left. \mathcal T^* + a\int_0^T\mathcal R^*_s\,ds \right|\mathcal F_t  \right],
	\]
	which together with \eqref{ass-1:power-consumption} and Doob's maximal inequality implies that for some $r>1$
	\[
			\mathbb E\left[ \sup_{0\leq t\leq T} P_t^r     \right]<\infty.
	\]
	Since, $\pi^*\in H^2_{BMO}$, by \eqref{variation-X:power-consumption}, $\sup_{0\leq t\leq T}|\mathcal X_t|$ is integrable to all powers. Thus,
	\[
			\lim_{n\rightarrow\infty}	\mathbb E\left[ P_{\tau_n}\mathcal X_{\tau_n}  \right] =\mathbb E\left[  P_T\mathcal X_T \right] = \mathbb E[\mathcal T^*\mathcal X_T] = -\mathbb E\left[ a \int_0^T\mathcal R^*_t(\mathcal X_t+\kappa_t)\,dt    \right],
	\]
	where the last equality is given by \eqref{variation-J:power-consumption}.
	
	Since $(\pi^*,c^*)\in  H^2_{BMO}\times L^\infty_{\text{log},+}$ and $(\lambda,\lambda^0)\in H^2_{BMO}\times H^2_{BMO}$ by Assumption \ref{ass:power-investment-consumption}, we can pass the limit inside the right hand side of \eqref{localization:power-consumption} to obtain
	\[
			\mathbb E\left[  \int_0^T P_t\eta_t\left(  h_t-\pi^*_t\Sigma^2_t+\lambda_t\sigma_t+\lambda^0_t\sigma^0_t   \right)  +\kappa(  a\mathcal R^*_t-P_tc^*_t) \,dt \right]=0,
	\]
	which yields \eqref{optimality-1:power-consumption} and \eqref{optimality-2:power-consumption} by the same argument as in Proposition \ref{prop:maximum-principle}. 
\end{proof}

 We are ready to conclude this section with the following theorem.
 
 \begin{theorem}\label{thm:power-consumption}
 The best response must satisfy
 \begin{equation}\label{necessary-condition:power-consumption}
 		c^* = \left(  a e^{-Y} \nu^{-(1-\gamma)\theta}   \right)^{\frac{1}{\gamma}},\quad \pi^* = \frac{h+\sigma Z+\sigma^0Z^0}{\gamma\Sigma^2},
 \end{equation}
  where $(Y,Z,Z^0)$ satisfies
 	\begin{equation}\label{BSDE:power-consumption}
 		\left\{\begin{split}
 		-dY_t =&~\left\{ \frac{1}{2}(Z^2_t+(Z^0_t)^2) + \frac{1-\gamma}{2\gamma} \frac{(h_t+\sigma_t Z_t+\sigma^0_tZ^0_t)^2}{\Sigma^2_t}+\gamma \left(  a e^{-Y_t} \nu^{-(1-\gamma)\theta}_t   \right)^{\frac{1}{\gamma}}\right\}\,dt\\
        &~-Z_t\,dW_t - Z^0_t\,dW^0_t,\\
 		Y_T=&~-(1-\gamma)\theta\log\nu_T,
 		\end{split}\right. 
 	\end{equation}
with $(Z,Z^0)\in H^2_{BMO}\times H^2_{BMO}$.
 \end{theorem}
\begin{proof}
	Define
	\begin{equation}\label{power-consumption:YZ}
		\begin{split}
			Y = \log P-(1-\gamma)\widehat X^*,\quad Z=\lambda - (1-\gamma)\sigma\pi^*,\quad Z^0=\lambda^0-(1-\gamma)\sigma^0\pi^*.
		\end{split}
	\end{equation}
We verify that $(Y,Z,Z^0)$ satisfies \eqref{BSDE:power-consumption}. The terminal condition satisfies immediately from \eqref{power-consumption:YZ} and \eqref{adjusted-terminal-utility:power-consumption}:
\[
		Y_T = \log P_T - (1-\gamma)\widehat X^*_T =\log\mathcal T^*-(1-\gamma)\widehat X^*_T=-(1-\gamma)\theta\log\nu_T.
\]
Taking $P=e^{(1-\gamma)\widehat X^*+Y}$ into \eqref{optimality-2:power-consumption} and noticing the definition of \eqref{adjusted-running-utility:power-consumption}, we have
\[
		c^* = \left(  a e^{-Y} \nu^{-(1-\gamma)\theta}   \right)^{\frac{1}{\gamma}}.
\]
Taking \eqref{power-consumption:YZ} into \eqref{optimality-1:power-consumption}, we have
\begin{equation}\label{candidate-pi:power-consumption}
		\pi^* = \frac{h+\sigma Z+\sigma^0Z^0}{\gamma\Sigma^2}.
\end{equation}
Taking \eqref{optimality-2:power-consumption} into the BSDE \eqref{BSDE-P-1:power-consumption}, we have
\[
		dP_t = -c^*_t P_t\,dt + P_t\lambda_t\,dW_t+ P_t\lambda^0_t \,dW^0_t,
\]
which implies by \eqref{power-consumption:YZ}
\begin{equation*}
	\begin{split}
		dY_t = \left\{  -\gamma c^*_t-\frac{1}{2}\left( \lambda^2_t+(\lambda^0_t)^2  \right) - (1-\gamma)\left(  \pi^*_t h_t-\frac{1}{2}(\pi^*_t)^2\Sigma^2_t  \right)     \right\}\,dt + Z_t\,dW_t + Z^0_t\,dW_t^0
	\end{split}
\end{equation*}
Using \eqref{power-consumption:YZ} and \eqref{candidate-pi:power-consumption}, we have 
\[
		\frac{1}{2}(\lambda^2+(\lambda^0)^2) + (1-\gamma)\pi^* h -\frac{1}{2}(1-\gamma)(\pi^*)^2\Sigma^2=\frac{1}{2}(Z^2+(Z^0)^2) + \frac{1-\gamma}{2\gamma} \frac{(h+\sigma Z+\sigma^0Z^0)^2}{\Sigma^2},
\] 
which yield \eqref{BSDE:power-consumption}.
\end{proof}

\begin{remark}
	If we assume $\nu \in L^\infty_{\text{log},+}$, which is the case in equilibrium (see e.g. Theorem \ref{thm:BSDE-to-NE} and Theorem \ref{thm:one-to-one}), then Theorem \ref{thm:power-consumption} implies that $Y\in L^\infty$. Moreover, the MOP (see e.g. Theorem \ref{thm:BSDE-to-NE}) immediately implies that any solution $(Y,Z,Z^0)\in L^\infty\times H^2_{BMO}\times H^2_{BMO}$ generates a best response $(\pi^*,c^*)\in H^2_{BMO}\times L^\infty_{\text{log},+}$.

 	Using DPP, it is proved in \cite{Fu2023} that in the random setting

(i) If $(Y,Z,Z^0)\in L^\infty\times L^\infty\times L^\infty$, then $(\pi^*,c^*)$ defined in \eqref{necessary-condition:power-consumption} is a best response with $(\pi^*,c^*)\in L^\infty\times L^\infty_{\text{log},+}$.

(ii) Conversely, if $(\pi^*,c^*)\in L^\infty\times L^\infty_{\text{log},+}$ is a best response, and $D^*$ satisfies $R_b$ for some $b>1$, then it must satisfy \eqref{necessary-condition:power-consumption} for some $(Y,Z,Z^0)$ solving \eqref{BSDE:power-consumption}. In general, the space for $(Z,Z^0)$ was not specified there, whereas in the deterministic case one has $(Z,Z^0)\in L^\infty\times L^\infty$.

Our theorem \ref{thm:power-consumption}, based on  maximum principle, strengthens this result in the random setting by establishing a genuine one-to-one correspondence. Namely, the space of best responses
\[
	\{ (\pi^*,c^*)\in H^2_{BMO}\times L^\infty_{\text{log},+}: (\pi^*,c^*)\text{ is a best response and}   \text{ Assumption }\ref{ass:power-investment-consumption} \text{ holds} \}
\]
is in bijection with the space of solutions to \eqref{BSDE:power-consumption} satisfying $(Z,Z^0)\in H^2_{BMO}\times H^2_{BMO}$.
\end{remark}
 
 
 \bibliography{Fu}

\end{document}